



\documentclass[sigconf]{aamas} 

\usepackage{balance} 
\usepackage{subcaption}
\usepackage[ruled,noend,linesnumbered]{algorithm2e}
\usepackage{tikz}
\usetikzlibrary{arrows,positioning,automata}

\renewcommand{\citeyear}[1]{\citealp{#1}}



\setcopyright{ifaamas}
\acmConference[AAMAS '24]{Proc.\@ of the 23rd International Conference
on Autonomous Agents and Multiagent Systems (AAMAS 2024)}{May 6 -- 10, 2024}
{Auckland, New Zealand}{N.~Alechina, V.~Dignum, M.~Dastani, J.S.~Sichman (eds.)}
\copyrightyear{2024}
\acmYear{2024}
\acmDOI{}
\acmPrice{}
\acmISBN{}



\acmSubmissionID{54}


\title{Coalition Formation with Bounded Coalition Size}


\author{Chaya Levinger}
\affiliation{
  \institution{Ariel University}
  \city{Ariel}
  \country{Israel}}
\email{chayal@ariel.ac.il}

\author{Noam Hazon}
\affiliation{
  \institution{Ariel University}
  \city{Ariel}
  \country{Israel}}
\email{noamh@ariel.ac.il}

\author{Sofia Simola}
\affiliation{
  \institution{Technische Universität Wien}
  \city{Vienna}
  \country{Austria}}
\email{sofia.simola@tuwien.ac.at}

\author{Amos Azaria}
\affiliation{
  \institution{Ariel University}
  \city{Ariel}
  \country{Israel}}
\email{amos.azaria@ariel.ac.il}


\begin{abstract}
In many situations when people are assigned to coalitions, the utility of each person depends on the friends in her coalition. Additionally, in many situations, the size of each coalition should be bounded. This paper studies such coalition formation scenarios in both weighted and unweighted settings.
Since finding a partition that maximizes the utilitarian social welfare is computationally hard, we provide a polynomial-time approximation algorithm. 
We also investigate the existence and the complexity of finding stable partitions. Namely, we show that 
the Contractual Strict Core (CSC) is never empty, but the Strict Core (SC) of some games is empty. 
Finding partitions that are in the CSC is computationally easy, but even deciding whether an SC of a given game exists is NP-hard. 
The analysis of the core is more involved. 
In the unweighted setting, we show that when the coalition size is bounded by $3$ the core is never empty, and we present a polynomial time algorithm for finding a member of the core. However, for the weighted setting, the core may be empty, and we prove that deciding whether there exists a core is NP-hard.

\end{abstract}



\keywords{Coalition formation; Additively separable hedonic games; Stability}

         
\newcommand{\BibTeX}{\rm B\kern-.05em{\sc i\kern-.025em b}\kern-.08em\TeX}

\makeatletter
\gdef\@copyrightpermission{
	\begin{minipage}{0.3\columnwidth}
		\href{https://creativecommons.org/licenses/by/4.0/}{\includegraphics[width=0.90\textwidth]{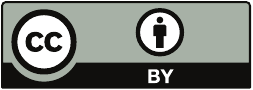}}
	\end{minipage}\hfill
	\begin{minipage}{0.7\columnwidth}
		\href{https://creativecommons.org/licenses/by/4.0/}{This work is licensed under a Creative Commons Attribution International 4.0 License.}
	\end{minipage}
	\vspace{5pt}
}
\makeatother

\newenvironment{procedure1}[1][htb]
  {
  \begin{algorithm}[#1]%
  }{\end{algorithm}}

\newtheorem{definition}{Definition}[section]
\newtheorem{theorem}{Theorem}
\newtheorem{lemma}[theorem]{Lemma}
\newtheorem{claim}[theorem]{Claim}

\newcommand{\wa}{7M}
\newcommand{\wk}{(k - 2)15M}
\newcommand{\wt}{6M}
\newcommand{\TDSRM}{\textsc{Metric-3DSR}}
\newcommand{\coalS}{S}
\newcommand{\partitionS}{P}
\newcommand{\Wset}{D}
\newcommand{\Wvertex}{d}
\newcommand{\uS}[2]{u(#1,#2)}
\newcommand{\uCPS}[3]{u(#1, {#3}^{-#2})}
\newcommand{\uCS}[2]{W(#1, #2)}


\begin{document}


\pagestyle{fancy}
\fancyhead{}


\maketitle 


\section{Introduction}

Suppose that a group of travelers, located at some origin, would like to reach the same destination, and later return. Each of the travelers has her own vehicle; but each traveler has a preference related to who will be with her in the vehicle. Namely, each traveler would rather share a vehicle with as many as possible of her friends during the ride, and thus the utility of each traveler is the number of friends traveling with her. However, the vehicles have a limited capacity; this capacity can either be a physical constraint of the vehicles, or the maximal number of travelers willing to travel together. How should the travelers be assigned to vehicles in order to maximize the social welfare (the sum of all travelers' utilities)? Can the travelers be organized in such a way that no subgroup of travelers will want to leave their current group and join together?
Similar questions arise when assigning students to dormitories, colleagues to office-rooms and workers to project teams. In these settings, it might be that the utility of each person does not depend only on the number of friends, but also on the intensity of friendship. 

This set of problems falls within hedonic games \cite{dreze1980hedonic}, in which a set of agents are partitioned into coalitions, and the utility for each agent depends only on the coalition that she is a member of.
Additively Separable Hedonic Games (ASHGs) \cite{bogomolnaia2002stability} are a special type of hedonic games, in which each agent has a value for any other agent, and the utility she assigns to a coalition is the sum of the values she assigns to its members. In ASHGs there is usually no restriction on the number of agents that are allowed to belong to a coalition. However, in our group of travelers example, the vehicles have physical capacity, and thus, there is an upper bound on the size of each coalition. Despite this restriction being natural, it is scarcely studied in the domain of hedonic games. 

In this paper, we study hedonic games with bounded coalition size. Specifically, we concentrate on symmetric ASHGs, in which the value an agent assigns to another agent is non-negative and it is equal to the value that the other agent assigns to her; we refer to these settings as the weighted settings.
We also study simple symmetric ASHGs, in which the value an agent assigns to other agents is either $0$ or $1$; we refer to these settings as the unweighted settings. 
These models capture many situations, such as social and friendship relations. We begin by studying the problem of finding a partition that maximizes the utilitarian social welfare. Since this problem is computationally hard for any coalition size bound $k>2$, even in the unweighted setting, we provide a polynomial-time approximation algorithm. We prove that in the unweighted setting, the algorithm has an approximation ratio of $\frac{1}{k-1}$. In the weighted setting, the algorithm has an approximation ratio of $\frac{1}{k}$ when $k$ is odd, and $\frac{1}{k-1}$ when $k$ is even.

We then study stability aspects of the problem. That is, we investigate the existence 
and the complexity of finding stable partitions. Namely, we show that 
the Contractual Strict Core (CSC) is never empty, but the Strict Core (SC) of some games is empty. 
Finding partitions that are 
in the CSC is computationally easy, but even deciding whether an SC of a given game exists is hard. 
The analysis of the core is more involved. 
In the unweighted setting, we show that for k=$3$ the core is never empty, and we present a polynomial time algorithm for finding a member of the core.
For $k>3$, it is unclear whether the core can be empty, and how to find a partition in the core.
Indeed, we show in simulation over 100 million games that a simple heuristic always finds a partition that is in the core.
For the weighted setting, the core may be empty even when $k=3$, and we prove that for any $k \geq 3$, deciding whether there exists a partition in the core is NP-hard.

To summarize, the contribution of this work is a systematic study of additively separable hedonic games with bounded coalition size. Namely, we provide an approximation algorithm for maximizing the utilitarian social welfare and study the computational aspects of several stability concepts.

\section{Related Work}

Dreze and Greenberg [\citeyear{dreze1980hedonic}] initiated the study of hedonic games,
in which the utility for each agent depends only on the coalition that she is a member of. Stability concepts of hedonic games were further analyzed in \cite{banerjee2001core} and \cite{cechlarova2001stability}. For more details, see the survey of Aziz et al. [\citeyear{aziz2016hedonic}].
A special case is Additively Separable Hedonic Games (ASHGs) \cite{bogomolnaia2002stability}, in which each agent has a value for any other agent, and the utility she assigns to a coalition is the sum of the values she assigns to its members. The computational aspects of ASHGs are analyzed in \cite{deng1994complexity,ballester2004np,olsen2009nash,sung2010computational,aziz2013computing,Bachrach2013,bilo2019optimality}. 
None of these works imposed any restriction on the size of the coalitions.

Indeed, there are few papers that impose a restriction on the size of the coalitions. 
Wright and Vorobeychik [\citeyear{wright2015mechanism}] study a model of ASHG where there is an upper bound on the size of
each coalition. Within their model, they propose a strategyproof
mechanism that achieves good and fair experimental
performance, despite not having a theoretical guarantee.
Flammini et al. [\citeyear{flammini2021online}] study the online partition problem. Similar to our work, they also consider the scenario that the coalitions are bounded by some number. They consider two cases for the value of a coalition, the sum of the weights of its edges, which is similar to our work, and the sum of the weights of its edges divided by its size. However, in both cases they only consider the online version, i.e., the agents arrive sequentially and must be assigned to a coalition as they arrive. This assignment cannot be adapted later on, and must remain. They show that a simple greedy algorithm achieves an approximation ratio of $\frac{1}{k}$ when the value of the coalition is the sum of the weights. 
Cseh et al. [\citeyear{cseh2019pareto}] require the partition to be composed of exactly $k$ coalitions, and also assume a predefined set of size constraints. Each coalition is required to exactly match its predefined size. They study the complexity of finding a Pareto optimal partition, as well as the complexity of deciding whether a given partition is Pareto optimal.
Bilò et al. [\citeyear{bilo2022hedonic}] consider the same settings as Cseh et al.
Since classical stability notions are infeasible in their setting, they study three different types of swap stability, and analyze the existence, complexity, and efficiency of stable outcomes. 
Note that almost all other works analyzing ASHGs assume that an agent may assign a negative value to another agent. Otherwise, since they do not impose any restrictions on the coalition size, the game becomes trivial, as the grand coalition is always an optimal solution. We found two exceptions that restrict the value each agent assigns to other agents to be either $0$ or $1$. Namely, 
Sless et al. [\citeyear{sless2018forming}] study the setting in which 
 the agents must be partitioned into exactly $k$ coalitions, without any restriction on each coalition's size. Li et al. [\citeyear{li2023partitioning}] study the setting in which the agents must be partitioned into exactly $k$ coalitions that are almost equal in their size.

\section{Preliminaries}
Let $V=\{v_1, ..., v_n\}$ be a set of agents, and let $G(V,E)$ be a weighted undirected graph representing the social relations between the agents. For every edge $e \in E$, the weight of the edge, $w(e)$, is positive. In the unweighted setting, all weights are set to $1$. 
A $k$-bounded coalition is a coalition of size at most $k$. A $k$-bounded partition $P$ is a partition of the agents into disjoint $k$-bounded coalitions. Given a coalition $S \in P$, and $v\in S$, let $N(v,S)$ be the set of immediate neighbors of $v \in V$ in $S$, i.e., $N(v,S) = \{u \in S : (v,u) \in E\}$. Let $W(v,S)$ be the sum of weights of immediate neighbors of $v \in V$ in $S$, i.e., $W(v,S) = \sum_{u \in N(v,S)}{w((v,u))}$. Note that in the unweighted setting, $W(v,S) = |N(v,S)|$.
An \textit{additively separable hedonic game with bounded coalition size} is a tuple $(G,k)$, where for every $k$-bounded partition $P$, coalition $S \in P$, and $v \in S$, the agent $v$ gets utility $W(v,S)$. We denote the utility of $v$ given a $k$-bounded partition $P$, by $u(v,P)$. Given a tuple $(G,k)$, the goal is to find a $k$-bounded partition $P$ that satisfies efficiency or stability properties.

We consider the following efficiency or stability concepts: 
\begin{itemize}
    \item The utilitarian social welfare of a partition $P$, denoted $u(P)$, is the sum of the utilities of the agents. That is, $u(P) = \sum\limits_{v\in V} u(v,P)$.
    A \emph{MaxUtil} $k$-bounded partition $P$ is a partition with maximum $u(P)$.
    \item A $k$-bounded coalition $S$ is said to \emph{strongly block} a $k$-bounded partition $P$ if for every $v \in S$, $W(v,S)>u(v,P)$.
    A $k$-bounded partition $P$ is in the \emph{Core} if it does not have any strongly blocking $k$-bounded coalitions.
    \item A $k$-bounded coalition $S$ is said to \emph{weakly block} a $k$-bounded partition $P$ if for every $v \in S$, $W(v,S)\geq u(v,P)$, and there exists some $v \in S$ such that $W(v,S) > u(v,P)$.
    A $k$-bounded partition $P$ is in the \emph{Strict Core (SC)} if it does not have any weakly blocking $k$-bounded coalitions.
    \item Given a partition $P$ and a set $S$, let $P^{-S}$ be the partition when $S$ breaks off. That is, $P^{-S}=\{S\} \cup \bigcup\limits_{C\in P} \{C\setminus S\}$.
    A $k$-bounded partition $P$ is in the \emph{Contractual Strict Core (CSC)} if for any weakly blocking $k$-bounded coalition $S$, there exists at least one agent $v$ such that $u(v,P^{-S})<u(v,P)$.
\end{itemize}


\section{Efficiency} 
We begin with the elementary concept of efficiency, which is to maximize the utilitarian social welfare.

\begin{definition}[MaxUtil problem]
Given a coalition size limit $k$ and a graph $G$, find a MaxUtil $k$-bounded partition.


\end{definition}
For example, given the unweighted graph in Figure \ref{fig:Graph} and a coalition size limit $k=3$, let $P = \{\{v_1,v_3,v_6\}, \allowbreak \{v_2,v_4,v_7\}, \{v_5,v_8\}\}$, shown in Figure~\ref{fig:Partition}. The utilitarian social welfare of this partition, $u(P)$ equals $14$. Indeed, this is an optimal $3$-bounded partition, since there is no other $3$-bounded partition with higher social welfare.
Clearly, the decision variant of the MaxUtil problem is to decide whether there exists a $k$-bounded partition with a utilitarian social welfare of at least $\upsilon$.


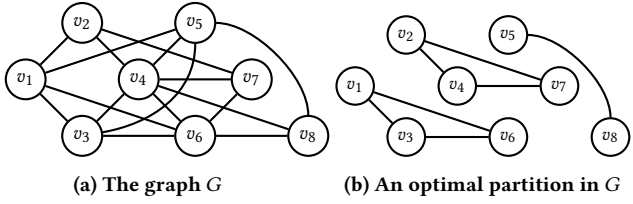
\begin{figure}
     \centering
     \begin{subfigure}{0.45\columnwidth}
         \centering
         \begin{tikzpicture}[node distance={12.5mm}, thick, main/.style = {draw, circle, scale=0.85}] {Graph}
            \node[main] (a) {$v_1$}; 
            \node[main] (b) [above right of=a] {$v_2$}; 
            \node[main] (c) [below right of=a] {$v_3$}; 
            \node[main] (d) [above right of=c] {$v_4$}; 
            \node[main] (e) [above right of=d] {$v_5$}; 
            \node[main] (f) [below right of=d] {$v_6$}; 
            \node[main] (g) [below right of=e] {$v_7$}; 
            \node[main] (h) [below right of=g] {$v_8$}; 
            \draw (a) -- (b); 
            \draw (a) -- (c); 
            \draw (a) -- (e); 
            \draw (b) -- (d); 
            \draw (c) -- (d); 
            \draw (e) -- (d); 
            \draw (e) to [out=270, in=10, looseness=1] (c); 
            \draw (f) -- (d); 
            \draw (g) -- (f); 
            \draw (g) -- (d); 
            \draw (f) -- (a); 
            \draw (g) -- (b);
            \draw (e) to [out=360, in=90, looseness=0.75] (h); 
            \draw (h) -- (d);
            \draw (h) -- (f);
            \draw (c) -- (f);
        \end{tikzpicture}
        \caption{The graph $G$}
        \label{fig:Graph}
     \end{subfigure}
     \hfill
     \begin{subfigure}{0.5\columnwidth}
         \centering
         \begin{tikzpicture}[node distance={12mm}, thick, main/.style = {draw, circle, scale=0.8}] {Graph}
            \node[main] (a) {$v_1$}; 
            \node[main] (b) [above right of=a] {$v_2$}; 
            \node[main] (c) [below right of=a] {$v_3$}; 
            \node[main] (d) [above right of=c] {$v_4$}; 
            \node[main] (e) [above right of=d] {$v_5$}; 
            \node[main] (f) [below right of=d] {$v_6$}; 
            \node[main] (g) [below right of=e] {$v_7$}; 
            \node[main] (h) [below right of=g] {$v_8$}; 
            \draw (a) -- (c); 
            \draw (f) -- (a); 
            \draw (c) -- (f);
            \draw (b) -- (d);
            \draw (g) -- (d); 
            \draw (g) -- (b);
            \draw (e) to [out=360, in=90, looseness=0.75] (h); 
        \end{tikzpicture}
        \caption{An optimal partition in $G$}
        \label{fig:Partition}
    \end{subfigure}
    \caption{An example for the MaxUtil problem where $k=3$.}
\end{figure}




\subsection{Approximation of the MaxUtil Problem}
The MaxUtil problem when $k=2$ is equivalent to the maximum (weight) matching problem, and thus it can be computed in polynomial time \cite{edmons1965paths}.
However, our problem becomes intractable when $k \geq 3$ even in the unweighted setting. 
Indeed, the decision variant of the MaxUtil problem in the unweighted setting is equivalent to the graph partitioning problem as defined by Hyafil and Rivest [\citeyear{HR73}], which they show to be $NP$-Complete.
Therefore, we provide the Match and Merge (MnM) algorithm (Algorithm \ref{alg:MnM}), which is a polynomial-time approximation algorithm for any $k \geq 3$. 
The algorithm consists of $k-1$ rounds. Each round is composed of a matching phase followed by a merging phase.
Specifically, in round $l$ MnM computes a maximum (weight) matching, $M_l \subseteq E_l$, for $G_l$ (where $G_1 = G$). In the merging phase, MnM creates a graph $G_{l+1}$ that includes a unified node for each pair of matched nodes. The graph $G_{l+1}$ also includes all unmatched nodes, along with their edges to the unified nodes (lines \ref{line:marge_edges_s}-\ref{line:marge_edges_e}).
Clearly, each node in $V_l$ is composed of up-to $l$ nodes from $V_1$.
Finally, MnM returns the $k$-bounded partition, $P$, of all the matched sets.
For example, given the graph $G_1$ in Figure \ref{fig:G_1} and $k=4$, the algorithm finds a maximum matching $M_1 = \{(v_1,v_2),(v_3,v_4)\}$ shown in Figure \ref{fig:M_1}.  It then creates the graph $G_2$, as shown in Figure \ref{fig:G_2}, and finds a maximum matching for it, $M_2 = \{(v_{3,4},v_5)\}$ shown in Figure \ref{fig:M_2}. It then creates the graph $G_3$, as shown in Figure \ref{fig:G_3}, and finds a maximum matching for it, $M_3=\{(v_{3,4,5}, v_6)\}$. 
Finally, MnM created the graph $G_4$, as shown in Figure \ref{fig:G_4}, and returns the $4$-bounded partition $P={\{v_1,v_2\},\{v_3,v_4,v_5,v_6\}}$.
We note that by the algorithm construction, a unified node $v_{i_1,...,i_l}$, is created by merging nodes $v_{i_1}$ and $v_{i_2}$, and then by merging $v_{i_1,i_2}$ and $v_{i_3}$, and so on.
 
\begin{algorithm}[tbp]
    \caption{Match and Merge (MnM)}
    \label{alg:MnM}
    \SetAlgoLined
    \textbf{Input}:
    A graph $G(V,E)$, and a limit $k$.\\
    \KwResult{A $k$-bounded partition $P$ of $V$.} 
    $G_1(V_1, E_1) \leftarrow G(V,E)$\\
    \For{$l\leftarrow1$ to $k-1$}{
        $M_l$ $\leftarrow$ maximum (weight) matching in $G_l$\\
        $G_{l+1}=(V_{l+1},E_{l+1}) \leftarrow$ an empty graph\\
        $V_{l+1} \leftarrow V_l$\\
        \For{\upshape every $(v_{i_1,...,i_l}, v_j) \in M_l$}{
            add vertex $v_{i_1,...,i_l,j}$ to $V_{l+1}$\\
            remove $v_{i_1,...,i_l}, v_j$ from $V_{l+1}$
        }
        \For{\upshape every $v_{i_1,...,i_{l+1}} \in V_{l+1}$}{\label{line:marge_edges_s}
            \For{\upshape every $v_q \in V_{l+1}$}{
                \If {\upshape $(v_{i_1,...,i_l}, v_q) \in E_l$ or $(v_{i_{l+1}}, v_q) \in E_l$}{
                    add $(v_{i_1,...,i_{l+1}}, v_q)$ to $E_{l+1}$
                }
            }
        }\label{line:marge_edges_e}
    }
    
     $P \leftarrow$ an empty partition\\
     \For{\upshape every $v_{i_1,...,i_j} \in G_k$}{
         add the set $\{v_{i_1},...,v_{i_j}\}$ to $P$
     }
    \textbf{return} P
\end{algorithm}

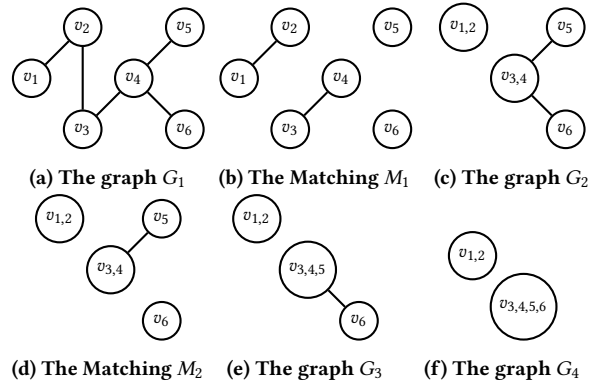
\begin{figure}[htbp]
    \centering
    \begin{subfigure}{.33\columnwidth}
        \centering
        \begin{tikzpicture}[node distance={12mm}, thick, main/.style = {draw, circle, scale = 0.8}] {Graph}
        \node[main] (1) {$v_1$}; 
        \node[main] (2) [above right of=1] {$v_2$}; 
        \node[main] (3) [below right of=1] {$v_3$}; 
        \node[main] (4) [above right of=3] {$v_4$}; 
        \node[main] (5) [above right of=4] {$v_5$}; 
        \node[main] (6) [below right of=4] {$v_6$}; 
        \draw (1) -- (2); 
        \draw (3) -- (2); 
        \draw (3) -- (4); 
        \draw (4) -- (5); 
        \draw (4) -- (6); 
        \end{tikzpicture}
        \caption{The graph $G_1$}
        \label{fig:G_1}
    \end{subfigure}
    \begin{subfigure}{.3\columnwidth}
        \centering
        \begin{tikzpicture}[node distance={12mm}, thick, main/.style = {draw, circle, scale = 0.8}] {Graph}
        \node[main] (1) {$v_1$}; 
        \node[main] (2) [above right of=1] {$v_2$}; 
        \node[main] (3) [below right of=1] {$v_3$}; 
        \node[main] (4) [above right of=3] {$v_4$}; 
        \node[main] (5) [above right of=4] {$v_5$}; 
        \node[main] (6) [below right of=4] {$v_6$}; 
        \draw (1) -- (2); 
        \draw (3) -- (4); 
        \end{tikzpicture}
        \caption{The Matching $M_1$}
        \label{fig:M_1}
    \end{subfigure}
    \begin{subfigure}{.3\columnwidth}
        \centering
        \begin{tikzpicture}[node distance={12mm}, thick, main/.style = {draw, circle, scale = 0.8}] {Graph}
        \node[main] (12) {$v_{1,2}$}; 
        \node[main] (34) [below right of=12] {$v_{3,4}$};  
        \node[main] (5) [above right of=34] {$v_5$}; 
        \node[main] (6) [below right of=34] {$v_6$}; 
        \draw (34) -- (5); 
        \draw (34) -- (6); 
        \end{tikzpicture}
        \caption{The graph $G_2$}
        \label{fig:G_2}
    \end{subfigure}
    \begin{subfigure}{.3\columnwidth}
        \centering
        \begin{tikzpicture}[node distance={12mm}, thick, main/.style = {draw, circle, scale = 0.8}] {Graph}
        \node[main] (12) {$v_{1,2}$}; 
        \node[main] (34) [below right of=12] {$v_{3,4}$};  
        \node[main] (5) [above right of=34] {$v_5$}; 
        \node[main] (6) [below right of=34] {$v_6$}; 
        \draw (34) -- (5); 
        \end{tikzpicture}
        \caption{The Matching $M_2$}
        \label{fig:M_2}
    \end{subfigure}
    \begin{subfigure}{.3\columnwidth}
        \centering
        \begin{tikzpicture}[node distance={12mm}, thick, main/.style = {draw, circle, scale = 0.8}] {Graph}
        \node[main] (12) {$v_{1,2}$}; 
        \node[main] (345) [below right of=12] {$v_{3,4,5}$};  
        \node[main] (6) [below right of=34] {$v_6$}; 
        \draw (345) -- (6); 
        \end{tikzpicture}
        \caption{The graph $G_3$}
        \label{fig:G_3}
    \end{subfigure}
    \begin{subfigure}{.3\columnwidth}
        \centering
        \begin{tikzpicture}[node distance={12mm}, thick, main/.style = {draw, circle, scale = 0.8}] {Graph}
        \node[main] (12) {$v_{1,2}$}; 
        \node[main] (3456) [below right of=12] {$v_{3,4,5,6}$};  
        \end{tikzpicture}
        \caption{The graph $G_4$}
        \label{fig:G_4}
    \end{subfigure}
    \caption{An example for Algorithm \ref{alg:MnM} where $k=4$.}
\end{figure}

\subsection{Approximation Ratio for Unweighted Setting}
We first show that MnM provides an approximation ratio of $\frac{1}{k-1}$ for the MaxUtil problem in the unweighted setting. For that end, we first prove the following lemma related to the possible edges in every $G_l$, $l>1$. Note that the indexes follow the order in which the nodes join the matched node.

\begin{lemma}
\label{lemma:5.5}
Given $\hat{v} = v_{i_1,...,i_l} \in V_l$, if there exist $v_i, v_j \in V_l$, $v_i\neq v_j$ such that $(v_i, v_{i_n}), (v_j, v_{i_m}) \in E$ for some $1 \leq n \leq m \leq l$, then $n = m$.
\end{lemma}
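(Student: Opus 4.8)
The plan is to argue by induction on $l$, exploiting the way the node $\hat v = v_{i_1,\ldots,i_l}$ is assembled. By the construction note, a node with exactly $l$ components is created at round $l-1$, by matching its prefix $\hat v' = v_{i_1,\ldots,i_{l-1}}$ with the singleton $v_{i_l}$ in the maximum matching $M_{l-1}$ of $G_{l-1}$. I would set up the induction so that the statement is assumed for every node of $V_{l-1}$ and proved for $\hat v \in V_l$; the base case $l=1$ is vacuous, since then $1\le n\le m\le 1$ forces $n=m=1$.

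First I would establish the structural observation that drives everything. Because $G_l$ is built from an empty edge set and edges incident to the freshly created node $\hat v$ are added only during round $l-1$ (via the test $(\hat v', v_q)\in E_{l-1}$ or $(v_{i_l}, v_q)\in E_{l-1}$), every neighbor of $\hat v$ in $G_l$ must be a vertex $v_q$ that already existed in $V_{l-1}$ and was left unmatched by $M_{l-1}$: a node newly merged in round $l-1$ is absent from $V_{l-1}$, so neither test can fire for it. In particular both $v_i$ and $v_j$ are exposed with respect to $M_{l-1}$, and each of the two edges $(\hat v, v_i),(\hat v, v_j)$ was inserted either through the last component $v_{i_l}$ (its attaching index is then $l$) or through the prefix $\hat v'$ (its attaching index is then at most $l-1$, inherited from a genuine edge of $G_{l-1}$).

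This yields a three-way split. If both edges are inserted through $v_{i_l}$, both attaching indices equal $l$ and $n=m$ trivially. If one of them, say $(\hat v, v_j)$, goes through $v_{i_l}$ while $(\hat v, v_i)$ goes through the prefix, then $v_i$ is $G_{l-1}$-adjacent to $\hat v'$ and $v_j$ is $G_{l-1}$-adjacent to $v_{i_l}$, while $\hat v'$ and $v_{i_l}$ are matched to each other in $M_{l-1}$; since $v_i\neq v_j$ are both exposed, $v_i - \hat v' - v_{i_l} - v_j$ is an augmenting path for $M_{l-1}$ (all three edges lie in $G_{l-1}$), contradicting its maximality, so this mixed case with $n<m$ cannot occur. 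Finally, if both edges go through the prefix, then $v_i,v_j$ are distinct $G_{l-1}$-neighbors of $\hat v' = v_{i_1,\ldots,i_{l-1}}\in V_{l-1}$ attaching to $v_{i_n},v_{i_m}$ with $n\le m\le l-1$, and the inductive hypothesis applied to $\hat v'$ forces $n=m$.

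The step I would be most careful about is reconciling the statement's reference to edges of $E$ with the lossy edge-construction of MnM: an original edge from a vertex of $v_i$ to a component $v_{i_n}$ need not survive as an edge of $G_l$, so the operative notion of ``$v_i$ attaches to $v_{i_n}$'' must be the one witnessed by the surviving edge of $G_l$ through which $v_i$ is actually a neighbor of $\hat v$. Pinning this down is exactly what makes the exposedness observation available and hence licenses the augmenting-path argument; without it the claim fails, since two distinct nodes may carry original edges to two different components provided one of those edges is discarded during merging. I would therefore state the attachment notion precisely in terms of the edge that the construction inserts, and check that the maximality argument only ever invokes edges genuinely present in $G_{l-1}$.
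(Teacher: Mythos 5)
Your inductive argument is internally coherent, but it proves a different statement from the lemma, and the hedge in your final paragraph is where the genuine error lies. You assume throughout that $v_i$ and $v_j$ are neighbors of $\hat{v}$ in $G_l$, and you define the attaching indices $n,m$ via the edge-insertion history of MnM; the lemma, however, hypothesizes only original edges $(v_i,v_{i_n}),(v_j,v_{i_m})\in E$, with no requirement that these survive into $E_l$. You noticed this mismatch, but you resolved it backwards: you claim the $E$-edge version ``fails, since two distinct nodes may carry original edges to two different components provided one of those edges is discarded during merging.'' That claim is false, and the $E$-edge version is exactly what the paper needs downstream (the removal step of Procedure \ref{alg:findMatch} and the counting in Lemma \ref{lem:Rl_size} reason about membership in sets of $Opt$, i.e., about the original graph, not about which edges MnM happens to preserve).

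The idea you are missing is the paper's opening observation, which works entirely at the level of $G_1$ and $M_1$: any single node still present in $V_l$ for $l>1$ is unmatched by $M_1$, and, by the construction note, every component $v_{i_m}$ of $\hat{v}$ with $m\geq 3$ was still a single node in $V_2$ and hence is also unmatched by $M_1$. Consequently an original edge $(v_j,v_{i_m})\in E$ with $m\geq 3$ is an edge between two $M_1$-exposed vertices and would by itself enlarge $M_1$ --- it simply cannot exist, discarded later or not. This kills every case with $m\geq 3$ outright, and the only remaining cross pattern, $n=1$, $m=2$, yields the length-three $M_1$-augmenting path $v_i \rightarrow v_{i_1} \rightarrow v_{i_2} \rightarrow v_j$ in $G_1$ (here $v_i\neq v_j$ is essential, as you correctly sensed). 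So the paper's proof is a short direct argument anchored entirely in the maximality of $M_1$, whereas your induction with augmenting paths against $M_{l-1}$ in $G_{l-1}$ --- while a sound way to control the edges MnM actually inserts --- addresses a reformulated lemma and, as written, leaves the stated lemma unproved while incorrectly asserting it to be false.
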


\begin{proof}
Observe that for every $v_i, v_j \in V_l$ where $l>1$, $(v_i,v_j) \notin E$, since $M_1$ is a maximum matching in $G_1$.  
Assume by contradiction and without loss of generality that $n<m$. If $n=1$ and $m=2$, then the path $v_i \rightarrow v_{i_n} \rightarrow v_{i_m} \rightarrow v_j$ is an $M_1$-augmenting path in $G_1$ (\cite{edmons1965paths}), contrary to the fact that $M_1$ is a maximum matching in $G_1$. Therefore, $m\geq 3$. 
 
Now, since $v_{i_m}$ joined the merged node only after the first merge stage, it must be a singleton node in $G_2$ (as well as $v_j$). In addition, since ($v_j$, $v_{i_m})\in E$, they should have been matched at the very first stage.


\end{proof}

We now present a hypothetical procedure (Procedure \ref{alg:findMatch}) that is provided with a solution to the MaxUtil problem, which is a $k$-bounded partition (of $G$) $Opt$, a graph $G_l$ (as defined in Algorithm \ref{alg:MnM}), and a corresponding round index $l$. Without loss of generality, we assume that every set $S \in Opt$ is a connected component. %
%
%
Let $O=\{v_o \mid \{v_o\} \in Opt$ and $v_o \in V_2\}$. That is, $|O|$ is the number of singletons in the partition $Opt$ that are also not matched in $M_1$. %
%
We show that Procedure \ref{alg:findMatch} finds a matching, and we provide a lower bound on the size of this matching (the number of edges in it). 
We further show that MnM is guaranteed to perform at least as well as this procedure, which, as we show, results in an approximation ratio of $\frac{1}{k-1}$ for every $k\geq3$.


\begin{procedure1}[hbtp]
    \caption{Find matching}
    \label{alg:findMatch}
    \SetAlgoLined
    \textbf{Input}:
    A $k$-bounded partition (of $G$) $Opt$, 
    and a graph $G_l=(V_l,E_l)$.\\
    \KwResult{A matching in $G_l$.}

    $R_l \leftarrow$ an empty matching\\
    \For {\upshape each $v_i \in V_l$ such that $\{v_i\} \in Opt$}
        {remove $v_i$ from $V_l$}
    \For{\upshape each $v_q \in V_l$}{\label{line:loop_k>3}
        let $\hat{v}$ be a vertex $v_{i_1,...,i_l}$ such that $(v_q, \hat{v})\in E_l$ and for a $1\leq j\leq l$, $v_q$ and $v_{i_j}$ belong to the same set in $Opt$ \\
            \For {\upshape each $v_n \neq v_q$}{ 
            \If {\upshape $(v_n, \hat{v}) \in E_l$ and exists $1\leq m\leq l$, s.t. $v_{i_m}$ and $v_n$ belong to the same set in $Opt$}{
                remove $v_n$ from $V_l$ }\label{line:remove_vn}
            }
        add $(v_q, \hat{v})$ to $R_l$\label{line:add_edge_k>3}
    }
    \textbf{return} $R_l$
\end{procedure1}

\begin{lemma}
\label{lem:Rl_size}
Procedure \ref{alg:findMatch} finds a matching, $R_l$, in the graph $G_l$, such that $|R_l| \geq (|V|-2|M_1|-\sum\limits_{i=2}^{l-1}|M_i|-|O|)/(k-1)$, where $l>1$.
\end{lemma}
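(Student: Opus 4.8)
My plan is to prove the lemma in two stages: first that $R_l$ is a well-defined matching of $G_l$, and then a charging argument that lower-bounds $|R_l|$. Throughout I use that every coalition $S \in Opt$ is connected and $k$-bounded, and the opening observation behind Lemma \ref{lemma:5.5}: since $M_1$ is a maximum matching of $G_1$, two vertices of $V_l$ that are both single (unmatched in $M_1$) are never adjacent in $G$. Two consequences matter. First, every node that serves as the anchor $\hat v$ in line \ref{line:add_edge_k>3} is a merged node, because $\hat v$ is chosen adjacent to a single node $v_q$ and a single node cannot be adjacent to another single node. Second, the number of merged nodes in $V_l$ is at most $|M_1|$: round $1$ creates exactly $|M_1|$ of them, and no later round can create a new one, as a merge of two single nodes is impossible by the same observation.

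To see that $R_l$ is a matching, I would argue that no anchor is reused and each $v_q$ is consumed once. When $(v_q,\hat v)$ is added, the inner loop at line \ref{line:remove_vn} removes every $v_n$ adjacent to $\hat v$ that shares an $Opt$-coalition with some component of $\hat v$. Here Lemma \ref{lemma:5.5} is decisive: all neighbors of $\hat v$ attach to the \emph{same} component $v_{i_j}$ of $\hat v$, so any future vertex that could select $\hat v$ as its anchor shares the coalition of $v_{i_j}$ and is therefore exactly one of the removed $v_n$. Thus $\hat v$ can be chosen at most once, and since a processed $v_q$ is matched and not revisited, $R_l$ is a valid matching.

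For the size bound I would charge, to each edge $(v_q,\hat v)$, the vertex $v_q$ together with all $v_n$ removed while processing it. By Lemma \ref{lemma:5.5} these all attach to the single component $v_{i_j}$ of $\hat v$ and hence lie in the single $Opt$-coalition $C$ containing $v_{i_j}$; as $v_{i_j}\in C$ is hidden inside $\hat v$ and $|C|\le k$, at most $k-1$ nodes are charged per edge, so the total number of charged nodes is at most $(k-1)\,|R_l|$. It remains to show the charged nodes number at least $|V|-2|M_1|-\sum_{i=2}^{l-1}|M_i|-|O| = |V_l|-|M_1|-|O|$. The nodes of $V_l$ that are never charged are the $Opt$-singletons deleted in the first loop (at most $|O|$ of them, all lying in $O$ since they are unmatched in $M_1$) together with the merged nodes (at most $|M_1|$ of them). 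Every remaining single, non-$Opt$-singleton node is charged, because when it is reached by the loop at line \ref{line:loop_k>3} the connectivity of its coalition supplies an adjacent merged node in the same coalition to serve as $\hat v$. Combining, $(k-1)|R_l|\ge |V_l|-|M_1|-|O|$, which is the claimed bound.

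The main obstacle is the dual use of Lemma \ref{lemma:5.5}: it must simultaneously confine each edge's charged vertices to one $k$-bounded coalition (yielding the denominator $k-1$ rather than $k$) and force anchor-uniqueness for matching validity. The most delicate point is verifying that \emph{every} single non-$Opt$-singleton node is actually charged, equivalently that when such a node is processed it still has an available eligible anchor, so that the uncharged nodes are accounted for solely by the at most $|O|$ deleted singletons and the at most $|M_1|$ merged nodes; this step leans on both the connectivity of the $Opt$ coalitions and the structural restrictions of Lemma \ref{lemma:5.5}.
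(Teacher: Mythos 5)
Your write-up tracks the paper's own proof closely: the matching-validity argument via the removal rule in line \ref{line:remove_vn} is the paper's argument almost verbatim, and your charging scheme (charge $v_q$ plus the removed nodes to each edge, at most $k-1$ per edge, then account for the uncharged nodes) is exactly the paper's accounting. You go beyond the paper on two points it leaves implicit, and both additions are sound: the verification that every remaining single, non-$Opt$-singleton node still has an eligible anchor (coalition connectivity plus the impossibility of single--single edges in $G_l$), and the bookkeeping that converts the bound into $|V|-2|M_1|-\sum_{i=2}^{l-1}|M_i|-|O|$, which matches the paper's computation of the number of single nodes $|V_l'|$.

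However, the pivotal step of your size bound contains a genuine gap. You write that by Lemma \ref{lemma:5.5} the charged nodes ``all attach to the single component $v_{i_j}$ of $\hat v$ and hence lie in the single $Opt$-coalition $C$ containing $v_{i_j}$.'' Two things go wrong here. First, Lemma \ref{lemma:5.5} yields a common \emph{attachment} index $j^{*}$ for the single neighbors of $\hat v$, and nothing ties $j^{*}$ to the index $j$ of the anchor condition: that condition only requires $v_q$ to share a coalition with \emph{some} component $v_{i_j}$, while its adjacency to $\hat v$ may run through a different component. Second, and more seriously, the ``hence'' is a non sequitur: being adjacent in $G$ to $v_{i_j}$ does not place a node in $v_{i_j}$'s coalition, and the removal rule in line \ref{line:remove_vn} fires whenever $v_n$ shares a coalition with \emph{any} component $v_{i_m}$, $1\le m\le l$. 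Concretely, take $k=3$ and $\hat v=v_{1,2}$ with $(v_1,v_2)\in M_1$; let $C=\{v_q,v_1,y\}$ and $C'=\{v_n,z,v_2\}$, where $v_q,y,v_n$ are single and adjacent to $v_1$, and $v_n$ reaches $v_2$ inside $C'$ through a vertex $z$ lying in a \emph{different} merged node. This is consistent with $M_1$ being maximum and with Lemma \ref{lemma:5.5} (all single neighbors of $\hat v$ attach at $v_1$), yet processing $v_q$ with anchor $\hat v$ removes both $y$ (with $m=1$) and $v_n$ (with $m=2$), charging $3=k$ single nodes to one edge; in this instance an execution that processes $v_q$ first outputs $|R_2|=1$, below the claimed $3/2$. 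To be fair, the paper's own proof asserts the same confinement in one line (``we remove nodes only when $m=j$, according to Lemma \ref{lemma:5.5}'') without further justification, so you have reproduced its most delicate step; but your attempt to spell it out makes the leap visible, and closing it requires more than Lemma \ref{lemma:5.5} — either the anchor and removal rules must be constrained (e.g., force the anchor to be reached through a coalition-mate and remove only coalition-mates of $v_{i_j}$, after which matching validity, which currently relies on the broad removal, must be re-proved) or the per-edge charge must be replaced by a global, coalition-wise accounting.
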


\begin{proof}
We first show that Procedure \ref{alg:findMatch} finds a matching, $R_l$, in the graph $G_l$.
At each iteration of the loop in line \ref{line:loop_k>3}, we add an edge between a single node, $v_q$, and a unified node, $v_{i_1,...,i_l}$.
We consider each single node only once. Therefore, it is not possible to add a single node twice to $R_l$. Similarly, each time a unified node is added to $R_l$, every single node $v_n \neq v_q$ such that $v_{i_m}$ and $v_n$ belong to the same set in $Opt$, for some $1\leq m\leq l$, is removed from $V_l$. Therefore, a unified node is not added more than once. That is, $R_l$ is a matching in $G_l$.

We now show a lower bound on the size of $|R_l|$.
Let $V_l'=\{v_i | v_i \in V_l\}$, i.e., the set of all the single nodes in $G_l$.
In line \ref{line:remove_vn} we remove nodes only when $m=j$ (according to Lemma \ref{lemma:5.5}).
Given $\hat{v} = v_{i_1,...,i_l}$, there are at most $k-1$ different nodes, $v_{j_1},...,v_{j_{k-1}}$ that are in the same set with $\hat{v}$ in $Opt$. Therefore, in each iteration of the loop in line \ref{line:loop_k>3}, we remove at most $k-2$ single nodes in line \ref{line:remove_vn} while adding one edge to $R_l$ in line \ref{line:add_edge_k>3}.
Thus, at least $\frac{1}{k-1}$ of the single nodes in $V_l$ (who are not in $O$) are matched to a unified node. Therefore, $|R_l| \geq \frac{|V_l'|-|O|}{k-1}$.
Now, $|V_2'|=|V_1|-2|M_1|$. In addition, at each iteration $l>i>1$, $|M_i|$ single nodes are each added to a unified node. Therefore, $|V_l'|=|V_1|-2|M_1|-\sum\limits_{i=2}^{l-1}|M_i|$. 
In addition, $V=V_1$.
Overall, $|R_l| \geq (|V|-2|M_1|-\sum\limits_{i=2}^{l-1}|M_i|-|O|)/(k-1)$.
\end{proof}

\begin{theorem}
Algorithm \ref{alg:MnM} provides a solution for the MaxUtil problem with an approximation ratio of $\frac{1}{k-1}$ for every $k\geq 3$, in the unweighted setting.
\end{theorem}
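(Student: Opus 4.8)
The plan is to pass to edge counts. In the unweighted setting every agent's utility equals its number of coalition-neighbours, so for any partition $P$ we have $u(P)=2|E_{\mathrm{in}}(P)|$, where $E_{\mathrm{in}}(P)$ is the set of edges with both endpoints in a common coalition. Writing $OPT=\tfrac12 u(Opt)$ for the number of internal edges of an optimal partition, it suffices to show $(k-1)\,|E_{\mathrm{in}}(P_{\mathrm{MnM}})|\ge OPT$. The first observation is a lower bound on what MnM secures: in round $l$ each of the $|M_l|$ matched pairs is merged across a genuine edge of $G$, these $\sum_{l=1}^{k-1}|M_l|$ edges are pairwise distinct, and every one of them is internal to a coalition of the returned partition. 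Hence $|E_{\mathrm{in}}(P_{\mathrm{MnM}})|\ge\sum_{l=1}^{k-1}|M_l|$, and it is enough to prove $(k-1)\sum_{l=1}^{k-1}|M_l|\ge OPT$.

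The engine for the lower bound is the hypothetical Procedure~\ref{alg:findMatch}. Since MnM chooses a \emph{maximum} matching $M_l$ in $G_l$ while the procedure returns some matching $R_l$ in the very same graph, $|M_l|\ge|R_l|$, and Lemma~\ref{lem:Rl_size} gives $|R_l|\ge(|V_l'|-|O|)/(k-1)$, where $|V_l'|$ is the number of single (unmerged) nodes entering round $l$. I would then exploit the recursion $|V_2'|=|V|-2|M_1|$ and $|V_{l+1}'|=|V_l'|-|M_l|$ to turn the per-round inequalities $(k-1)|M_l|\ge|V_l'|-|O|$ into a telescoping bound; in particular, substituting the round-$(k-1)$ inequality and using $2|M_1|\le(k-1)|M_1|$ for $k\ge3$ yields $(k-1)\sum_l|M_l|\ge|V|-|O|$, i.e. MnM merges all but an $|O|$-sized remainder of the agents.

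To finish I would bound $OPT$ from above. Each optimal coalition $S$ (which we may assume connected) has $|E(S)|\le\binom{|S|}{2}\le\tfrac{k-1}{2}|S|$, and agents isolated inside their optimal coalition contribute nothing, so $OPT\le\tfrac{k-1}{2}\,n^{+}$, where $n^{+}$ counts the agents that are non-isolated in $Opt$. One then wants to combine this with the lower bound on $\sum_l|M_l|$ to reach $(k-1)\sum_l|M_l|\ge OPT$, and hence the ratio $\tfrac1{k-1}$.

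I expect the decisive difficulty to lie exactly in this last reconciliation, because the Procedure/Lemma bound is informative only in the sparse, matching-limited regime: when $Opt$ contains dense coalitions the quantities $|V_l'|-|O|$ collapse to zero (or below) and Lemma~\ref{lem:Rl_size} becomes vacuous, so the argument must instead rely on the fact that the \emph{actual} maximum matchings $|M_l|$ are then large enough on their own to pay for the $\binom{|S|}{2}$ internal edges of each dense coalition. The careful part is to carry both regimes simultaneously and, crucially, to argue that the agents MnM ultimately leaves unmerged are precisely those charged to $|O|$ — namely agents from which an optimal solution also extracts essentially no utility — so that the $|O|$ slack never costs more than a factor of $k-1$ against $OPT$.
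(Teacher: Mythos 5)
Your overall architecture matches the paper's: lower-bound $u(P)$ by $2\sum_{l=1}^{k-1}|M_l|$, invoke $|M_l|\geq|R_l|$ together with Lemma~\ref{lem:Rl_size}, and upper-bound $u(Opt)$ by $(k-1)(|V|-|O|)$ (your $OPT\leq\frac{k-1}{2}n^{+}$ with $n^{+}\leq|V|-|O|$ is the same bound in edge form). The gap is the quantitative middle step. Your telescoping shortcut --- applying the lemma only at round $k-1$ and coarsening coefficients via $2|M_1|\leq(k-1)|M_1|$ --- does prove $(k-1)\sum_l|M_l|\geq|V|-|O|$, but this is too weak: combined with $OPT\leq\frac{k-1}{2}(|V|-|O|)$ it yields only $u(P)\geq\frac{2}{(k-1)^2}\,u(Opt)$, which coincides with $\frac{1}{k-1}$ at $k=3$ but falls short by a factor of $\frac{k-1}{2}$ for every $k\geq4$. (Your gloss that this inequality means ``MnM merges all but an $|O|$-sized remainder of the agents'' also does not follow from it.) What is actually needed is the stronger statement $2\sum_l|M_l|\geq|V|-|O|$, i.e., $u(P)\geq|V|-|O|$, and that is where the real work lies.

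The paper obtains this by iterating the lemma through \emph{every} round $2,\ldots,k-1$, not just the last: each substitution contributes a factor $\frac{k-2}{k-1}$, producing the geometric lower bound $\sum_{i=2}^{k-1}|M_i|\geq\sum_{i=0}^{k-3}\bigl(\frac{k-2}{k-1}\bigr)^i\cdot\frac{|V|-|O|-2|M_1|}{k-1}=(|V|-|O|-2|M_1|)\bigl(1-(\frac{k-2}{k-1})^{k-2}\bigr)$. It then shows by a calculus argument that $(\frac{k-2}{k-1})^{k-2}\leq\frac{1}{2}$ for all $k\geq3$, so in the resulting bound $u(P)\geq 2(|V|-|O|)\bigl(1-(\frac{k-2}{k-1})^{k-2}\bigr)-2|M_1|\bigl(1-2(\frac{k-2}{k-1})^{k-2}\bigr)$ the coefficient of $|M_1|$ is nonpositive, and substituting $|M_1|\leq\frac{|V|-|O|}{2}$ gives exactly $u(P)\geq|V|-|O|$. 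Note that this also resolves, without any case split, the two-regime difficulty you flag at the end: in the dense regime where the per-round bounds degenerate, $|M_1|$ is large and the explicit $2|M_1|$ term carries the bound, and the sign analysis of the $|M_1|$-coefficient is precisely what lets both regimes be handled simultaneously. As written, your proposal assembles the right ingredients but stops exactly at the step that determines the constant, so it does not establish the $\frac{1}{k-1}$ ratio.
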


\begin{proof}
Let $P$ be the $k$-bounded partition returned by Algorithm \ref{alg:MnM}.
Clearly, $u(P) \geq 2\cdot\sum\limits_{i=1}^{k-1} |M_i|$.
For every $l\geq 1$, $M_l$ is a maximum matching and thus $|M_l| \geq |R_l|$. In addition, according to Lemma \ref{lem:Rl_size}, $|R_l| \geq \frac{|V|-2|M_1|-\sum\limits_{i=2}^{l-1} |M_i| - |O|}{k-1}$. 
%
Therefore, \[u(P) \geq 2\cdot\sum\limits_{i=1}^{k-1} |M_i| = 2|M_1| + 2\cdot \sum\limits_{i=2}^{k-1} |M_i|.\]
\[\sum\limits_{i=2}^{k-1} |M_i| = |M_2| + |M_3| + ... + |M_{k-1}| \geq\] \[|M_2| + |M_3| + ... + |M_{k-2}| + |R_{k-1}| \geq |M_2| + |M_3| + ... + |M_{k-2}| + \] \[\frac{|V| - |O| - 2|M_1| - |M_2| - ... - |M_{k-2}|}{k-1} =\] \[\frac{|V| - |O| - 2|M_1|}{k-1} + \frac{k-2}{k-1} \sum\limits_{i=2}^{k-2} |M_i| \geq (1 + \frac{k-2}{k-1}) \cdot \frac{|V| - |O| - 2|M_1|}{k-1} + \] \[(\frac{k-2}{k-1})^2 \sum\limits_{i=2}^{k-3} |M_i| \geq ... \geq \]\[(1 + \frac{k-2}{k-1} + (\frac{k-2}{k-1})^2 + ... + (\frac{k-2}{k-1})^{k-3}) \cdot \frac{|V| - |O| - 2|M_1|}{k-1} + \] \[(\frac{k-2}{k-1})^{k-2} \sum\limits_{i=2}^{k-1-(k-2)} |M_i| = \sum\limits_{i=0}^{k-3}\Big((\frac{k-2}{k-1})^i \cdot \frac{|V| - |O| - 2|M_1|}{k-1}\Big).\]

That is, \[u(P) \geq 2|M_1| + 2\cdot\sum\limits_{i=0}^{k-3}\Big((\frac{k-2}{k-1})^i \cdot \frac{|V| - |O| - 2|M_1|}{k-1}\Big) =\]
%
%
%
%
\[2|M_1| + 2\cdot\frac{|V|-|O|-2|M_1|}{k-1} \cdot \frac{(\frac{k-2}{k-1})^{(k-2)}-1}{\frac{k-2}{k-1} -1} =\] \[2|M_1| + 2(|V|-|O|-2|M_1|) \cdot \frac{(\frac{k-2}{k-1})^{(k-2)}-1}{(k-1)(\frac{k-2}{k-1} -1)} = \]\[ 2|M_1| - 2(|V|-|O|-2|M_1|)((\frac{k-2}{k-1})^{(k-2)}-1) = \]\[2(|V| - |O|)(1 - (\frac{k-2}{k-1})^{(k-2)}) - 2|M_1|(1 - 2 \cdot (\frac{k-2}{k-1})^{(k-2)}).\]

Next, we show that $(1 - 2 \cdot (\frac{k-2}{k-1})^{(k-2)})\geq 0$.
Let $f(k) = (\frac{k-2}{k-1})^{k-2}$, for $k\geq 3$. Thus, \[f'(k) = \frac{(k-2)^{k-2}\big(\ln(\frac{k-2}{k-1})(k-1) + 1\big)}{(k-1)^{k-1}}.\] 
Now, $\frac{(k-2)^{k-2}}{(k-1)^{k-1}} > 0$.
In addition, it is known that $\ln(x) \leq x-1$~\cite{10.2307/3615890}, and thus $\ln(\frac{k-2}{k-1})(k-1) + 1 \leq -\frac{1}{k-1}(k-1) + 1 = 0$. Therefore, for all $k\geq3$, $f'(k) \leq 0$ and $f(k) \leq f(3) = \frac{1}{2}$.

Overall, since $|M_1| \leq \frac{|V|-|O|}{2}$, 
\[u(P) \geq 2(|V| - |O|)(1 - (\frac{k-2}{k-1})^{(k-2)}) - 2\cdot\frac{|V|-|O|}{2}(1 - 2 \cdot (\frac{k-2}{k-1})^{(k-2)})\] \[ = %
|V|-|O|.\]
Now, since in $Opt$ there are at least $|O|$ singletons, then $u(Opt)$ is at most $(|V|-|O|)\cdot(k-1)$, which occurs when all nodes are partitioned into cliques of size $k$ (except those in $O$). That is, 
\[ u(P) \geq \frac{u(Opt)}{k-1}.\]
\end{proof}

Since finding a maximum matching in a graph can be computed in $O(|E|\sqrt{|V|})$, Algorithm \ref{alg:MnM} runs in $O(kn^{2.5})$ time.

\subsection{Approximation Ratio for Weighted Setting}
We now show that in the weighted setting MnM provides an approximation ratio of $\frac{1}{k}$ for the MaxUtil problem with an odd $k$ and $\frac{1}{k-1}$ for the problem with an even $k$. Specifically, we show that the first step of the algorithm, which finds a maximum weight matching, provides such an approximation ratio.
\begin{theorem}
\label{thm:approx_weight}
    Algorithm \ref{alg:MnM} provides a solution for the MaxUtil problem in the weighted setting with an approximation ratio of $\frac{1}{k}$ for an odd $k$ and an approximation ratio of $\frac{1}{k-1}$ for an even $k$.
\end{theorem}

\begin{figure}[ht]
    \centering
\resizebox{3in}{!}{
    \begin{subfigure}{.45\columnwidth}
    \centering
        \begin{tikzpicture}[scale=0.65,node distance={12.5mm}, main/.style = {draw, circle, scale = 0.8}]
            \foreach \phi in {1,...,7}{
                \node[main] (\phi) at (360/7 * \phi:2.5cm){$v_{\phi}$};
            }
            \draw [ultra thick] (1) -- (2);
            \draw [ultra thick] (3) -- (7);
            \draw [ultra thick] (4) -- (6);    
            \draw [blue, ultra thick] (2) -- (3);
            \draw [blue, ultra thick] (1) -- (4);
            \draw [blue, ultra thick] (5) -- (7);
            \draw [red, ultra thick] (3) -- (4);
            \draw [red, ultra thick] (2) -- (5);
            \draw [red, ultra thick] (1) -- (6);
            \draw [yellow, ultra thick] (4) -- (5);
            \draw [yellow, ultra thick] (3) -- (6);
            \draw [yellow, ultra thick] (2) -- (7);
            \draw [pink, ultra thick] (5) -- (6);
            \draw [pink, ultra thick] (4) -- (7);
            \draw [pink, ultra thick] (3) -- (1);
            \draw [orange, ultra thick] (6) -- (7);
            \draw [orange, ultra thick] (5) -- (1);
            \draw [orange, ultra thick] (4) -- (2);
            \draw [green, ultra thick] (7) -- (1);
            \draw [green, ultra thick] (6) -- (2);
            \draw [green, ultra thick] (5) -- (3);
        \end{tikzpicture}
    \end{subfigure}
    \begin{subfigure}{.45\columnwidth}        
    \centering
        \begin{tikzpicture}[scale=0.55, node distance={12.5mm},  main/.style = {draw, circle, scale = 0.8}]
            \foreach \phi in {1,...,7}{
                \node[main] (\phi) at (360/7 * \phi:3cm){$v_{\phi}$};}
            \node[main] (8) at (0, 0){$v_{8}$};
            \draw [ultra thick] (2) -- (7);
            \draw [ultra thick] (3) -- (6);
            \draw [ultra thick] (4) -- (5);    
            \draw [ultra thick] (1) -- (8);    
            \draw [blue, ultra thick] (3) -- (1);
            \draw [blue, ultra thick] (4) -- (7);
            \draw [blue, ultra thick] (5) -- (6);
            \draw [blue, ultra thick] (2) -- (8);
            \draw [red, ultra thick] (3) -- (8);
            \draw [red, ultra thick] (4) -- (2);
            \draw [red, ultra thick] (5) -- (1);
            \draw [red, ultra thick] (6) -- (7);
            \draw [yellow, ultra thick] (4) -- (8);
            \draw [yellow, ultra thick] (5) -- (3);
            \draw [yellow, ultra thick] (6) -- (2);
            \draw [yellow, ultra thick] (7) -- (1);
            \draw [pink, ultra thick] (5) -- (8);
            \draw [pink, ultra thick] (6) -- (4);
            \draw [pink, ultra thick] (7) -- (3);
            \draw [pink, ultra thick] (1) -- (2);
            \draw [orange, ultra thick] (6) -- (8);
            \draw [orange, ultra thick] (5) -- (7);
            \draw [orange, ultra thick] (4) -- (1);
            \draw [orange, ultra thick] (3) -- (2);
            \draw [green, ultra thick] (7) -- (8);
            \draw [green, ultra thick] (1) -- (6);
            \draw [green, ultra thick] (2) -- (5);
            \draw [green, ultra thick] (3) -- (4);
        \end{tikzpicture}
    \end{subfigure}
    }
    \caption{Examples for edge coloring in graphs with an odd and an even number of vertices.}
    \label{fig:partition}
\end{figure}
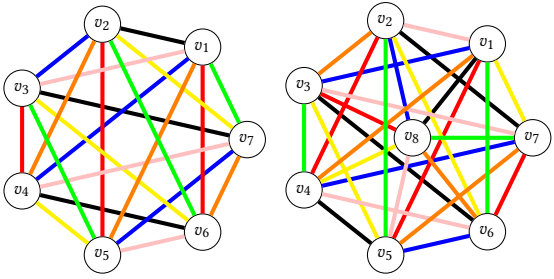

\begin{proof}
let $Opt = \{S_1, S_2, ...\}$ be an optimal $k$-bounded partition, and let $M_i$ be a maximum weight matching for coalition $S_i$. 
Due to Behzad et al. [\citeyear{behzad1967colour}], for a graph with $n$ vertices, there exists a proper edge coloring using $n$ colors for an odd $n$ and $n-1$ colors for an even $n$ (examples for these coloring for $n=7$ and for $n=8$ are shown in Figure \ref{fig:partition}). 
Clearly, each color defines a matching in the graph. Thus, each coalition $S_i$ can be partitioned into $|S_i|$ disjoint matchings, if $|S_i|$ is odd, and $|S_i|-1$ disjoint matchings otherwise. We note that since the matchings are disjoint and cover the entire graph, the sum of the weights of all matchings equals the sum of the weights of all edges in the graph induced by $S_i$.
In addition, for each $i$, $|S_i| \leq k$. Let $M_{S_i}$ be a maximum weight matching for $S_i$, imposed by one of the colors. Now, since a maximum is at least as great as the average, and since $M_i$ is a maximum weight matching for coalition $S_i$, $\sum_{e \in M_i} w(e) \geq \sum_{e \in M_{S_i}} w(e) \geq \frac{\sum_{v\in S_i}{W(v,S_i)}}{k}$ for an odd $k$ and $\sum_{e \in M_i} w(e) \geq \sum_{e \in M_{S_i}} w(e) \geq \frac{\sum_{v\in S_i}{W(v,S_i)}}{k-1}$ for an even $k$.
%
Let $M$ be the maximum weight matching for $G$ found by Algorithm \ref{alg:MnM} in its first step. Clearly, $\sum_{e \in M} w(e) \geq \sum_i \sum_{e \in M_i} w(e)$. In addition, for the $k$-bounded partition $P$ that MnM returns, $u(P) \geq \sum_{e \in M} w(e)$. Therefore, in the weighted setting, Algorithm \ref{alg:MnM} provides a solution for the MaxUtil problem with an approximation ratio of $\frac{1}{k}$ for an odd $k$ and an approximation ratio of $\frac{1}{k-1}$ for an even $k$.

\end{proof}

We refer the reader to the appendix for results related to the tightness of the approximation ratio of MnM.
    
    

\section{Stability}
When considering a stability concept $c$, we analyze the following two problems:
\begin{itemize}
    \item Existence: determine whether for any $(G,k)$ there exists a partition that satisfies $c$.
    \item Finding: given $(G,k)$, decide if there exists a partition that satisfies $c$ and if so, find such a partition.
\end{itemize}

\subsection{Core}
We begin with the unweighted setting. We show that for $k=3$ the core is never empty, and we present Algorithm \ref{alg:core}, a polynomial time algorithm that finds a $3$-bounded partition $P$ in the core. The algorithm begins with all agents in singletons and iteratively considers for each $3$-bounded coalition whether it strongly blocks the current partition.

\begin{algorithm}[ht]
    \caption{Finding a $3$-bounded partition in the core}
    \label{alg:core}
    \SetAlgoLined
    \textbf{Input}:
    A graph $G(V,E)$.\\
    \KwResult{A $3$-bounded partition $P$ of $V$ in the core.} 
    $P \leftarrow \{\{v\}$ for every $v \in V\}$\\
    $V' \leftarrow V$\\
    outerLoop:\\ 
    \label{core:outerLoop}
        \For{$S \subset V'$, such that $|S|=2$ OR $|S|=3$}{
            
            \If{$\forall v \in S, W(v ,S) > u(v,P)$}{
            \label{core:if}
                $P \leftarrow P^{-S}$ \\
                \If{$S$ is clique of size $3$}{
                    $V' \leftarrow V' \setminus S$\\ \label{alg:core:line:remove}
                }
                \textbf{goto} outerLoop
            }
        }
        \textbf{return} P
\end{algorithm}

\begin{theorem}
In the unweighted setting, there always exists a $3$-bounded partition in the core, and it can be found in polynomial time.
\end{theorem}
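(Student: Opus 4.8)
The plan is to prove nonemptiness and polynomial-time computability together, by arguing that Algorithm \ref{alg:core} always halts after polynomially many iterations and returns a partition in the core. First I would record the two invariants the algorithm maintains. At every point $P$ is a valid $3$-bounded partition, since $P^{-S}$ only creates the coalition $S$ (of size $\le 3$) and shrinks the remaining coalitions. Moreover, every agent that has been removed from $V'$ sits in a size-$3$ clique and hence has utility exactly $2$. The second invariant is safe precisely because every coalition $S$ inspected by the loop is a subset of $V'$, so applying $P^{-S}$ never touches the coalitions of the locked agents; a locked agent therefore keeps utility $2$ for the rest of the run.

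Correctness then follows quickly. In the unweighted $3$-bounded setting the maximum utility of any agent is $2$, attained only inside a triangle. A strongly blocking coalition requires \emph{every} member to strictly increase its utility, so it cannot contain a locked agent (whose utility is already maximal). Consequently any potential strongly blocking coalition is a subset of $V'$ of size $2$ or $3$ (singletons give utility $0$ and never block), and these are exactly the sets scanned by the loop. If the algorithm returns, it found none, so $P$ is in the core.

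The substantive part is bounding the number of blocking steps, and here the obvious idea—that the social welfare $u(P)=2\cdot(\text{edges inside coalitions})$ strictly increases at each step—fails. I would compute $u(P^{-S})-u(P)=\sum_{v\in S}\bigl(W(v,S)-u(v,P)\bigr)-L$, where $L$ counts the coalition-internal edges with exactly one endpoint in $S$, and observe that for a \emph{triangle} block this can equal $0$ (think of three agents each currently paired with a private outside neighbor). This non-monotonicity is the main obstacle. The resolution is a lexicographic potential $\Phi(P)=\bigl(|V\setminus V'|,\,u(P)\bigr)$. I would classify each blocking coalition by the subgraph $G[S]$: a blocking pair must be an edge joining two agents of current utility $0$, and a blocking triple must be a triangle or a path, since an agent with no internal edge in $S$ gets utility $0$ there and cannot improve. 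For triangle blocks $|V\setminus V'|$ jumps by $3$, so $\Phi$ strictly increases regardless of welfare; for size-$2$ and path blocks $|V\setminus V'|$ is unchanged while the formula above gives $u(P^{-S})-u(P)\ge 2$, so $\Phi$ again strictly increases.

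Finally I would bound the length of the run. The coordinate $|V\setminus V'|$ takes at most $\lfloor n/3\rfloor+1$ values and $u(P)\in\{0,\dots,2|E|\}$, so $\Phi$ ranges over a totally ordered set of size $O(n^3)$; a strictly increasing sequence in it has length $O(n^3)$, bounding the number of modifications of $P$. Each pass of the outer loop inspects $O(n^3)$ candidate coalitions, each tested in constant time, so the total running time is polynomial. In particular, a core partition exists for every unweighted game with $k=3$.
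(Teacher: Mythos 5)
Your proposal is correct and follows essentially the same route as the paper: the same algorithm, the same observation that agents locked in size-$3$ cliques have maximal utility $2$ and hence can never join a strongly blocking coalition (so restricting the scan to $V'$ is sound), and the same key termination insight that welfare stalls only on triangle blocks formed by three utility-$1$ agents, which the algorithm then removes from $V'$. The paper packages your lexicographic potential $\bigl(|V\setminus V'|,\,u(P)\bigr)$ more directly as ``each iteration either increases $u(P)$ by at least $2$ or removes a triangle,'' which yields the slightly sharper bound of at most $|E|+|V|/3$ iterations instead of your $O(n^3)$, but this is a cosmetic difference, not a different argument.
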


\begin{proof}
Consider Algorithm \ref{alg:core}.
Note that for every $3$-bounded partition $P$, if $S \in P$ is a clique of size $3$ then every $v \in S$ cannot belong to any strongly blocking coalition. 
Therefore, Algorithm~\ref{alg:core} removes such vertices from $V'$ (in line~\ref{alg:core:line:remove}).
Clearly, if Algorithm \ref{alg:core} terminates, the $3$-bounded partition $P$ is in the core.
We now show that Algorithm \ref{alg:core} must always terminate, and it runs in polynomial time.
The algorithm initiates a new iteration (line  \ref{core:outerLoop}) whenever the if statement in line \ref{core:if} is true, which can happen when the blocking coalition $S$, is one of the following:
\begin{itemize}
    \item Only singletons (i.e., two or three singletons). Then, $u(P)$ increases by at least $2$.
    \item One agent from a coalition in which she has one neighbor, and two singleton agents. Then, $u(P)$ increases by at least $2$.
    \item Two agents, each from a coalition with a single neighbor, and one singleton agent. Then, $S$ must be a clique of size $3$, which increases $u(P)$ by $2$.
    \item Three agents, each from a coalition with a single neighbor. Then, $S$ must also be a clique of size $3$; however, $u(P)$ remains the same.
\end{itemize}
Overall, either $u(P)$ has increased by at least $2$ or $S$ is a clique of size $3$ and thus its vertices are removed from further consideration (in line~\ref{alg:core:line:remove}). 
Since $u(P)$ is bounded by $2|E|$ and the number of vertices is finite, the algorithm must terminate after at most $|E|+|V|/3$ iterations.
\end{proof}

For $k>3$ it is unclear whether the core can be empty, and how to find a partition in the core.
Indeed, we show in simulation that a simple heuristic always finds a partition that is in the core.
Our heuristic function works as follows:
\begin{enumerate}
    \item Start with a $k$-bounded partition $P$, where all the agents are singletons.
    \item Iterate randomly over all the $k$-bounded coalitions until a coalition $S$ is found, which strongly blocks the partition $P$.
    \item Update $P$ to be $P^{-S}$, and return to step $(2)$.
\end{enumerate}
The heuristic terminates when either there is no strongly blocking coalition for the partition $P$ (i.e., $P$ is in the core), or when in $100$ consecutive iterations the algorithm only visits partitions that it has already seen before. In the latter case, we restart the search from the very beginning. 

We test our heuristic function for $k=5$ over more than $100$ million random graphs of different types: 
\begin{itemize}
\item Random graphs of size $30$ with probability of $0.5$ for rewiring each edge.
\item Random trees of size $30$.
\item Random connected Watts–Strogatz small-world graphs of size $30$, where each node is joined with its $5$ nearest neighbors in a ring topology and with a probability of $0.5$ for rewiring each edge.
\end{itemize}

Our heuristic always found a $k$-bounded partition that is in the core. Moreover, we had to restart the heuristic in only $33$ instances, and then a $k$-bounded partition in the core was found.

We continue with the analysis of the weighted setting. We first show that the core may be empty. Specifically, Figure \ref{fig:emptyCore} provides an example of such a graph for $k=3$. We use a computer program that iterates over all possible $3$-bounded partitions, for verifying that each such $3$-bounded partition has a strongly blocking $3$-bounded coalition.

\begin{figure}
\centering
\resizebox{2.2in}{!}{
\begin{tikzpicture}[node distance={14mm}, thick, main/.style = {draw, circle, scale=0.8}]
    
    \node[main] (1) {$v_1$};
    \node[main] (2) [below of=1] {$v_2$}; 
    \node[main] (3) [below left of=2] {$v_3$}; 
    \node[main] (4) [below left of=3] {$v_4$};     
    \node[main] (5) [above left of=4] {$v_5$}; 
    \node[main] (6) [below right of=2] {$v_6$}; 
    \node[main] (7) [below left of=6] {$v_7$}; 
    \node[main] (8) [below right of=6] {$v_8$}; 
    \node[main] (9) [above right of=8] {$v_9$}; 
    
    \tikzset{mystyle/.style={color=black, scale=0.8}} 
    \tikzset{every node/.style={fill=white, scale=0.8}}
    \path   (1)     edge [mystyle]    node  {$6$} (2)
            (2)     edge [mystyle]    node  {$7$} (3)
            (2)     edge [mystyle]    node  {$5$} (6)
            (3)     edge [mystyle]    node  {$4$} (6)
            (3)     edge [mystyle]    node  {$5$} (4)
            (3)     edge [mystyle]    node  {$4$} (7)
            (6)     edge [mystyle]    node  {$4$} (7)
            (6)     edge [mystyle]    node  {$7$} (8)
            (4)     edge [mystyle]    node  {$6$} (5)
            (4)     edge [mystyle]    node  {$7$} (7)
            (7)     edge [mystyle]    node  {$5$} (8)
            (8)     edge [mystyle]    node  {$6$} (9);

\end{tikzpicture}
}
\caption{An example of a weighted graph in which the core is empty, for $k=3$.}
\label{fig:emptyCore}
\end{figure}
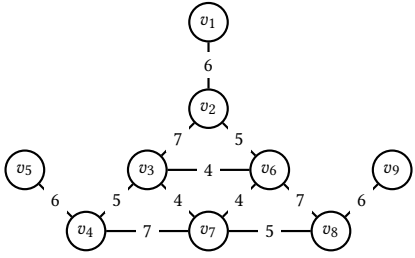

Next, we show that the existence problem for the core in the weighted setting is NP-hard, for any $k \geq 3$. Formally,
\begin{definition}[Core existence problem]
Given a coalition size limit $k$ and a graph $G$, decide whether a $k$-bounded partition in the core exists.
\end{definition}
We reduce from the following problem, which was shown to be strongly NP-complete by \cite{deineko2013two}. 

\begin{definition}[3-Dimensional Stable Roommates with Metric Preferences (\TDSRM)]
 Let $A$ be A set of agents such that $|A| = 3n, n \in \mathbb{N}$, equipped with a metric distance function $d$. Given an agent $i \in A$ and two triples $S_1$ and $S_2$, such that $i \in S_1, S_2$, agent $i$ is said to strictly prefer a triple $S_1$ to $S_2$ if $\sum_{j \in S_1 \setminus \{i\}}d(i,j) < \sum_{j \in S_2 \setminus \{i\}} d(i,j)$. A partition of $A$ into triples is said to be core-stable if there is no triple of agents $T$ in which each of the agents strictly prefers $T$ to her triple in the partition.
 The \TDSRM\ problem asks whether there exists a core-stable partition of $A$ into triples.
\end{definition}

\begin{theorem}\label{thm:core_np_w}
In the weighted setting, the Core existence problem is strongly $NP$-hard, for every $k \geq 3$.
\end{theorem}

\begin{figure}
\centering
\begin{tikzpicture}[node distance={14mm}, thick, main/.style = {draw, circle, scale=0.75}]

\foreach \n / \nn / \x / \y / \c in {
d1/d^1_s/2/2.2/white!80!green,
d2/d^2_s/2/0.8/white!80!green,
ui/u_i/4/1.5/white!80!red,
uj/u_j/5/2.5/white!80!red,
t1/t_1/0.4/2.5/white!80!blue,
t2/t_2/-1/1.5/white!80!blue,
t3/t_3/0.4/0.5/white!80!blue} {
    \node[main, fill=\c] at (\x, \y) (\n) {$\nn$};
}

 \tikzset{mystyle/.style={color=black, scale=0.8, inner sep = 1pt}} 
    \tikzset{every node/.style={fill=white, scale=0.8}}
    
\foreach \s / \e / \w / \d in {
t1/t2/45M/0,
t1/t3/45M/30,
t2/t3/45M/0,
t1/d1/6M/0,
t1/d2/6M/0,
t2/d1/6M/0,
t2/d2/6M/0,
t3/d1/6M/0,
t3/d2/6M/0,
ui/d1/7M/0,
ui/d2/7M/0,
d1/d2/45M/0,
ui/uj/{2M - d(i,j)}/0} {
    \draw (\s) edge [mystyle, bend right =\d]    node  {\small $\w$} (\e);
}

\end{tikzpicture}
\caption{An illustration of the construction in Theorem \ref{thm:core_np_w} for $k=5$, showing the edges between $T$, $D_s$ for some $s \in \{1, \dots, n + 1\}$ and $u_i$ for some $i \in A$. Additionally, we show the edge between $u_i$ and an arbitrary $u_j, j \in A$.}
\label{fig:gadget}
\end{figure}
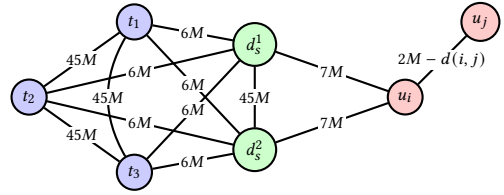

\begin{proof}
Throughout this proof, for a partition $P$ and element $i$, let $P(i)$ denote the coalition of $i$ in $P$.
Let $(A, d)$ be an instance of \TDSRM. We construct an instance $(G = (V, E),k)$ of the Core existence problem. Let $M \coloneqq \max_{i, j \in A}d(i,j) + 1$. We create a set of agent vertices $U \coloneqq \{u_i \mid i \in A\}$. For every pair $\{i, j\} \in A$, create an edge $\{u_i, u_j\}$ with weight $2M - d(i,j)$. Intuitively, these weights enforce that every agent prefers being in a triple to being in a pair or alone. Additionally, restricted to triples, these weights give raise to preferences that are identical to those of $(A, d)$.

If $k = 3$, then $V \coloneqq U$.
If $k \geq 4$, we additionally create $(n + 1)(k - 3)$ many dummy vertices $\Wset \coloneqq \{\Wvertex^t_s \mid s \in \{1, \dots, n + 1\}, t \in \{1, \dots, k - 3\}\}$. The dummy vertices enforce that every agent vertex prefers to be in a triple containing two other agent vertices and $k - 3$ dummies. For every $s \in \{1, \dots, n + 1\}$, the agents $\Wset_s \coloneqq \{\Wvertex^t_s \mid t \in \{1, \dots, k - 3\}\}$ form a clique with edges of weight $\wk$. Thus the dummies want to always be with their clique. There are no other edges between dummy agents. For every $i \in A, s \in \{1, \dots, n + 1\}, t \in \{1, \dots, k - 3\}$, we create an edge $\{u_i, \Wvertex^t_s\}$ with weight $\wa$. We also create additional vertices $T \coloneqq \{t_1, t_2, t_3\}$. There is an edge of weight $\wk$ between every pair of vertices in $T$. For every $\Wvertex^t_s, s \in \{1, \dots, n + 1\}, t \in \{1, \dots, k - 3\}, j \in \{1, 2, 3\}$, we add an edge $\{\Wvertex^t_s, t_j\}$ with weight $\wt$. We set $V \coloneqq U \cup \Wset \cup T$. An illustration of our construction is depicted in Figure~\ref{fig:gadget}.
We begin by showing the following properties of our construction. 

\begin{claim}\label{cl:constr_props}
The following properties hold:
\begin{enumerate}
\item If $k \geq 4$, then for every $s \in \{1, \dots, n + 1\}, t \in \{1, \dots, k - 3\}$, the vertex $\Wvertex^t_s$ strictly prefers a coalition containing $\Wset_s$ to one that does not.\label{cl:constr_propsB} 
\item If $k \geq 4$, then for every $j \in \{1, 2, 3\}$, the vertex $t_j$ strictly prefers a coalition containing $T$ to one that does not. Moreover, for every $s \in \{1, \dots, n + 1\}$, there is no coalition that $t_j$ strictly prefers over $\Wset_s \cup T$. \label{cl:constr_propsT} 
\item Let $U^3, U^2 \subseteq U$ be arbitrary subsets such that $|U^3| = 3$ and $|U^2| = 2$.  If $k \geq 4$, then for every $s \in \{1, \dots, n + 1\}, t \in \{1, \dots, k - 3\}$, the vertex $\Wvertex^t_s$ strictly prefers $\Wset_s \cup U^3$ to $\Wset_s \cup T$, and $\Wset_s \cup T$ to $\Wset_s \cup U^2$.\label{cl:constr_propsBA}
\item If $k \geq 4$, then for every $i \in A$, the vertex $u_i$ strictly prefers a coalition $\coalS$ to $\coalS'$ if $\Wset_s \subseteq S$ for some $s \in \{1, \dots, n + 1\}$ and $\Wset \cap S' = \emptyset$.\label{cl:constr_propsAB}
\item If $k \geq 4$, then for every $i \in A, \{j, \ell\}, \{j', \ell'\} \subseteq A \setminus \{i\}$, $s,s' \in \{1, \dots, n+1\}$ the vertex $u_i$ (strictly) prefers the coalition $\Wset_s \cup \{u_i, u_j, u_{\ell}\}$ to $\Wset_{s'} \cup \{u_i, u_{j'}, u_{\ell'}\}$ if and only if $i$ (strictly) prefers $\{i,j,\ell\}$ to $\{i,j',\ell'\}$.

If $k = 3$, then for every $i \in A, \{j, \ell\}, \{j', \ell'\} \subseteq A \setminus \{i\}$, the vertex $u_i$ (strictly) prefers the coalition $\{u_i, u_j, u_{\ell}\}$ to $\{u_i, u_{j'}, u_{\ell'}\}$ if and only if $i$ (strictly) prefers $\{i,j,\ell\}$ to $\{i,j',\ell'\}$.\label{cl:constr_propsA}
\item If $k \geq 4$,  then for every $i \in A, j, \ell, j' \in A \setminus \{i\}, j \neq \ell$, $s, s' \in \{1, \dots, n + 1\}$, the vertex $u_i$ strictly prefers the $\Wset_s \cup \{u_i, u_j, u_{\ell}\}$ to $\Wset_{s'} \cup \{u_i, u_{j'}\}$ and to $\Wset_{s'} \cup \{u_i\}$.

If $k = 3$, then for every $i \in A, j, \ell, j' \in A \setminus \{i\}, j \neq \ell$, the vertex $u_i$ strictly prefers the coalition $ \{u_i, u_j, u_{\ell}\}$ to $\{u_i, u_{j'}\}$ and to~$\{u_i\}$. \label{cl:constr_propsAn}
\item If $k \geq 4$, then for every $\{x_1, x_2, x_3\}, \{x_4, x_5\}, \{x_6\} \subseteq U \cup T, s \in \{1, \dots, n + 1\}, t \in \{1, \dots, k - 3\}$, the vertex $\Wvertex^t_s$ strictly prefers $\Wset_s \cup \{x_1, x_2, x_3\}$ to $\Wset_s \cup \{x_4, x_5\}$ to $\Wset_s \cup \{x_6\}$ to~$\Wset_s$. \label{cl:constr_propsBn}
\end{enumerate}
\end{claim}
\begin{proof} 

\textbf{Case \eqref{cl:constr_propsB}:} If $k = 4$ the statement is trivially true, because $\Wset_s = \{\Wvertex^1_s\}$ for every $s \in \{1, \dots, n + 1\}$. Assume $k \geq 5$. Let $s \in \{1, \dots, n + 1\}, t \in \{1, \dots, k - 3\}$. Let $\coalS \subseteq V$ be an arbitrary set such that $,|S| \leq k, \Wvertex^t_s \in \coalS$ and $\Wset_s \nsubseteq \coalS$. Then $\uCS{\Wvertex^t_s}{\coalS} \leq (k - 5)\wk + 4 \cdot \wa$. However, $\uCS{\Wvertex^t_s}{\Wset_s} = (k - 4)\wk  = (k - 5)\wk + \wk \geq (k - 5)\wk + 45M> \uCS{\Wvertex^t_s}{\coalS}$.\\

\textbf{Case \eqref{cl:constr_propsT}:}
Let $j \in \{1, \dots, k - 3\}$. Let $\coalS \subseteq V$ be an arbitrary set such that $|\coalS| \leq k, t_j \in \coalS$ and $T \nsubseteq \coalS$. Then $\uCS{t_j}{\coalS} \leq \wk + (k - 2)\wt$. However, $\uCS{t_j}{T} = 2\wk = \wk + \wk> \uCS{t_j}{\coalS}$.\\

\textbf{Case \eqref{cl:constr_propsBA}:}
We have that $\uCS{\Wvertex^t_s}{\Wset_s \cup U^3} = (k - 4)\wk + 3 \cdot \wa = (k - 4) \wk + 21M$, $\uCS{\Wvertex^t_s}{\Wset_s \cup T} = (k - 4)\wk + 3 \cdot \wt = (k - 4) \wk + 18M$ and $\uCS{\Wvertex^t_s}{\Wset_s \cup U^2} = (k - 4)\wk + 2 \cdot \wa = (k - 4)\wk + 14M$. Thus $\uCS{\Wvertex^t_s}{\Wset_s \cup U^3} > \uCS{\Wvertex^t_s}{\Wset_s \cup T} > \uCS{\Wvertex^t_s}{\Wset_s \cup U^2}$.\\

\textbf{Case \eqref{cl:constr_propsAB}:}
We have that $\uCS{u_i}{\coalS} \geq (k - 3)\wa \geq (k - 3)2M + 5M$ and $\uCS{u_i}{\coalS'} \leq (k - 1)2M = (k - 3)2M + 4M$. Thus $\uCS{u_i}{\coalS} > \uCS{u_i}{\coalS'}$.\\

\textbf{Case \eqref{cl:constr_propsA}:}
First assume $k \geq 4$. Then
\begin{align*}
&\uCS{u_i}{\Wset_s \cup \{u_i, u_j, u_{\ell}\}} > \uCS{u_i}{\Wset_{s'} \cup \{u_i, u_{j'}, u_{\ell'}\}} \iff\\
&2M - d(i, j) + 2M - d(i, \ell) + 3 \cdot \wa >\\ &\qquad 2M - d(i, j') + 2M - d(i, \ell') + 3 \cdot \wa \iff\\
&d(i,j) + d(i, \ell) < d(i,j') + d(i, \ell'),
\end{align*}
i.e., $u_i$ strictly prefers $\Wset_s \cup \{u_i, u_j, u_{\ell}\}$ to $\Wset_{s'} \cup \{u_i, u_{j'}, u_{\ell'}\}$ if and only if $i$ strictly prefers $\{i, j, \ell\}$ to $\{i, j', \ell'\}$. The computations for weakly preferring are identical. For the case where $k = 3$, 

\begin{align*}
\uCS{u_i}{\{u_i, u_j, u_{\ell}\}} &> \uCS{u_i}{\{u_i, u_{j'}, u_{\ell'}\}} \iff\\
2M - d(i, j) + 2M - d(i, \ell) &> 2M - d(i, j') + 2M - d(i, \ell') \iff\\
d(i,j) + d(i, \ell) &< d(i,j') + d(i, \ell'),
\end{align*}
instead.\\

\textbf{Case \eqref{cl:constr_propsAn}:}
First assume $k \geq 4$. Then $\uCS{u_i}{\Wset_s \cup \{u_i, u_j, u_{\ell}\}} \geq (k - 3) \wa + 2(M + 1) = (k - 3) \wa + 2M + 2$, $\uCS{u_i}{\Wset_s \cup \{u_i, u_{j'}\}} \leq (k - 3) \wa + 2M$ and $\uCS{u_i}{\Wset_s \cup \{u_i\}} = (k - 3) \wa$. The statement follows.

If $k = 3$, we have instead that $\uCS{u_i}{\{u_i, u_j, u_{\ell}\}} \geq  2(M + 1)$, $\uCS{u_i}{\Wset_s \cup \{u_i, u_{j'}\}} \leq  2M$ and $\uCS{u_i}{ \Wset_s \cup \{u_i\}} = 0$. The statement follows similarly.\\

\textbf{Case \eqref{cl:constr_propsBn}:}
We have that $\uCS{\Wvertex^t_s}{\Wset_s \cup \{x_1, x_2, x_3\}} \geq (k - 4)\wk + 3 \cdot \wt > (k - 4)\wk + 2 \cdot \wa \geq \uCS{\Wvertex^t_s}{\Wset_s \cup \{x_4, x_5\}}$. Similarly we have that $\uCS{\Wvertex^t_s}{\Wset_s \cup \{x_4, x_5\}} \geq (k - 4) \wk + 2 \cdot \wt > (k - 4) \wk + \wa \geq \uCS{\Wvertex^t_s}{\Wset_s \cup \{x_6\}}$. Finally $\uCS{\Wvertex^t_s}{\Wset_s \cup \{x_6\}} \geq (k - 4)\wk + \wt > (k - 4) \wk = \uCS{\Wvertex^t_s}{\Wset_s}$.

\end{proof}

\paragraph{If $(A, d)$ admits a core-stable partition, then $(G = (V, E),k)$ admits a core-stable partition.}

Let $\partitionS = \{\coalS_1, \dots, \coalS_n\}$ be a core stable partition of $A$.  We construct a partition $\partitionS'$ of $V$ as $\partitionS' \coloneqq \{\{u_i \mid i \in \coalS_j \}\cup \Wset_j \mid j \in \{1, \dots, n\}\} \cup \{\Wset_{n + 1} \cup T\}$ if $k \geq 4$ and $\partitionS' \coloneqq \{\{u_i \mid i \in \coalS_j \mid j \in \{1, \dots, n\} \}$ if $k = 3$.
Assume that $\partitionS'$ is not core-stable in $(G, k)$. Then exists $\coalS \subseteq V, |\coalS| \leq k$ that blocks~$\partitionS'$.\\

We start by showing that if $k \geq 4$, any blocking coalition $\coalS$ can have at most $3$ elements from $U$, and the rest of the elements must be from $\Wset$. 

First note that for every $\Wvertex^t_s, s \in \{1, \dots, n\}, t \in \{1, \dots, k - 3\}$, we have that $\uS{\Wvertex^t_s}{ \partitionS'} = (k - 4)\wk + 3\cdot \wa = (k - 4)\wk + 21M$. For every $\Wvertex^t_{n+1}, t \in \{1, \dots, k - 3\}$, we have that $\uS{\Wvertex^t_{n+1}}{ \partitionS'} = (k - 4)\wk + 3 \cdot \wt = (k - 4)\wk + 18M$.
By Claim \ref{cl:constr_props}\eqref{cl:constr_propsT}, no vertex in $T$ wants to deviate.

If $\Wvertex^t_s \in \coalS$, where $ s \in \{1, \dots, n + 1\}, t \in \{1, \dots, k - 3\}$, then $\Wset_s \subseteq \coalS$. Assume, towards a contradiction, that there is some dummy agent $\Wvertex^t_s \in \coalS, s \in \{1, \dots, n + 1\}, t \in \{1, \dots, k - 3\}$ such that $\Wset_s \nsubseteq \coalS$. By Claim \ref{cl:constr_props}\eqref{cl:constr_propsB}, the vertex $\Wvertex^t_s$ prefers $\Wset_s \subseteq \partitionS'(\Wvertex^t_s)$ to $\coalS$, a contradiction to $\coalS$ blocking. 

Moreover, if $\Wset_s \subseteq \coalS$, then $\Wset_{s'} \nsubseteq \coalS$ for any $s' \in \{1, \dots, n + 1\} \setminus \{s\}$. Assume, towards a contradiction, that $\Wset_s \cup \Wset_{s'} \subseteq \coalS$ for some $s,s' \in \{1, \dots, n + 1\}, s \neq s'$. Note that this is only possible if $2 (k - 3) \leq k \iff k \leq 6$. Then for every $\Wvertex^t_s, t \in \{1, \dots, k - 3\}$, $u(\Wvertex^t_s, \coalS) \leq (k - 4)\wk + (6 - k)\wt$. Since $6 - k \leq 3$, it follows that $\uCPS{\Wvertex^t_s}{\coalS}{\partitionS'} \leq \uS{\Wvertex^j_i}{ \partitionS'}$, a contradiction. Therefore, if $\coalS$ contains some element in $\Wvertex^t_s \in \Wset$, then $\Wset \cap \coalS = \Wset_s$.

If $\Wset_s \nsubseteq (\Wset \cap \coalS) $ for any $s \in \{1, \dots, n + 1\}$, then no agent in $U \cap \coalS$ wants to deviate by Claim \ref{cl:constr_props}\eqref{cl:constr_propsAB}, a contradiction. Thus there must be some $s \in \{1, \dots, n + 1\}$ such that $\Wset_s \subseteq \coalS$. If $|\coalS \cap U| >3$, then $|\coalS \cap \Wset| < k - 3$, and thus $\Wset_s \nsubseteq \coalS$ for any $s \in \{1, \dots, n + 1\}$, contradicting the previous statement. 

Now we continue in the generic case where $k \geq 3$.
If $|\coalS \cap U| < 3$, then by Claim \ref{cl:constr_props}\eqref{cl:constr_propsAn}, every $u_i \in \coalS, i \in A$ prefers $\partitionS'(u_i)$ to $\coalS$, a contradiction to $\coalS$ blocking.
Thus it must be that $|\coalS \cap U| = 3$. By Claim \ref{cl:constr_props}\eqref{cl:constr_propsA}, every $u_i \in \coalS, i \in A$ strictly prefers $\partitionS'(u_i)$ to $\coalS$ if and only if $i$ prefers $\hat{\coalS} \coloneqq \{i \in A\mid u_i \in \coalS \cap U\}$ to $\partitionS(i)$. But then $\hat{\coalS}$ blocks $\partitionS$, a contradiction.

\paragraph{If $(G,k)$ admits a core-stable partition, then $(A, d)$ admits a core-stable partition.}

Let $\partitionS'$ be a core stable partition of $V$.

We first show that if $k \geq 4$, then every coalition in $\partitionS'$ contains a clique of the vertices in $\Wset$. Assume, towards a contradiction, that there is $\Wvertex^t_s, s \in \{1, \dots, n + 1\}, t \in \{1, \dots, k - 3\}$ such that $\Wset'_s \nsubseteq \partitionS'(\Wvertex^t_s)$. By Claim \ref{cl:constr_props}\eqref{cl:constr_propsB}, $\Wvertex^{t}_s$ prefers $\Wset_s$ to $\partitionS'(\Wvertex^j_i)$. Thus $\Wset_s$ blocks $\partitionS'$, a contradiction. By identical reasoning on Claim \ref{cl:constr_props}\eqref{cl:constr_propsT}, we have that $\partitionS'(t_1) = \partitionS'(t_2) = \partitionS'(t_3)$. For every $s \in \{1, \dots, n + 1\}, t \in \{1, \dots, k - 3\}$, let $\partitionS'(\Wset_s) \coloneqq \partitionS'(\Wvertex^t_s)$. Because all the elements in $\Wset_s$ are in the same coalition in $\partitionS'$, this is well-defined. Similarly let $\partitionS'(T) \coloneqq \partitionS'(t_1)$.

Next we show that for every $s, s' \in \{1, \dots, n + 1\}, s \neq s'$, we have that $\partitionS'(\Wset_s) \neq \partitionS'(\Wset_{s'})$. Assume, towards a contradiction, that $\partitionS'(\Wset_s) = \partitionS'(\Wset_{s'})$ for some $s, s' \in \{1, \dots, n + 1\}, s \neq s'$. Then there are at most $n$ coalitions that contain $\Wset_{s''}$ for some $s'' \in \{1, \dots, n + 1\}$. Since $|U \cup T | = 3n + 3$, there must be at least three vertices in $U \cup T$ that do not have any vertices from $\Wset$ in their coalition. Since $\partitionS'(t_1) = \partitionS'(t_2) = \partitionS'(t_3)$, if one of the vertices in $T$ does not have vertices from $\Wset$ in their coalition, none of them does. In this case $\Wset_{s} \cup T$ blocks $\partitionS'$ by Claim \ref{cl:constr_props}\eqref{cl:constr_propsT} and \eqref{cl:constr_propsBn}. In the case that none of these vertices is in $T$, let us call an arbitrary size-3 subset of them $U^3$. The coalition $U^3 \cup \Wset_{s}$ blocks $\partitionS'$ by Claim \ref{cl:constr_props}\eqref{cl:constr_propsAB} and \eqref{cl:constr_propsBn}.

Next we show that if $k \geq 4$, then every coalition containing some triple of vertices in $U$ must also contain a clique of vertices from $\Wset$. Assume, towards a contradiction, that there is some $i \in A$ such that $\Wset_s \nsubseteq \partitionS'(u_i)$ for any $s \in \{1, \dots, n + 1\}$ and $|\partitionS'(u_i) \cap U| \geq 3$. Let $U^3$ be an arbitrary size $3$ subset of $\partitionS'(u_i) \cap U$. Since there are $3n$ agents in $U$ and $n + 1$ cliques in $\Wset$, there must be a clique $\Wset_s, s \in \{1, \dots, n + 1\}$ such that $|\partitionS'(\Wset_s) \cap U| \leq 3$. By Claim \ref{cl:constr_props}\eqref{cl:constr_propsAB}, every agent $u_j \in U^3$ prefers $U^3 \cup \Wset_s$ to $\partitionS'(u_j)$. By Claim \ref{cl:constr_props}\eqref{cl:constr_propsBA} and \eqref{cl:constr_propsBn}, every agent in $\Wset_s$ prefers $U^3 \cup \Wset_s$ to $\partitionS'(\Wset_s)$. Thus $\partitionS'$ is not stable.

We proceed to show that for every $u_i, i \in A$, $|\partitionS'(u_i) \cap U| = 3$. Assume, towards a contradiction, that for some $u_i, i \in A$, $|\partitionS'(u_i) \cap U| > 3$. If $k = 3$, this trivially leads to a contradiction. Thus assume $k \geq 4$. Then $|\partitionS'(u_i) \cap \Wset| < k - 3$. Since every vertex in $\Wset$ must have its whole clique in the coalition, $\partitionS'(u_i) \cap \Wset = \emptyset$. Since there are $\{1, \dots, n + 1\}$ cliques in $\Wset$ and $3n$ agents in $U$, there must be some $\Wset_{s'}, s' \in \{1, \dots, n + 1\}$ such that $|\partitionS'(\Wset_{s'}) \cap U| < 3$. Let $U^3$ be an arbitrary subset of $\partitionS'(u_i)$ such that $|U^3| = 3$. By Claim \ref{cl:constr_props}\eqref{cl:constr_propsA}, every vertex in $U^3$ prefers $U^3 \cup \Wset_{s'}$ to its current coalition. By Claim \ref{cl:constr_props}\eqref{cl:constr_propsBA} and \eqref{cl:constr_propsA}, the vertices in $\Wset_{s'}$ also prefer $U^3 \cup \Wset_{s'}$ to $\partitionS'(\Wset_{s'})$, meaning that $U^3 \cup \Wset_{s'}$ blocks $\partitionS'$, a contradiction.

If there is some $u_i, i \in A$, such that $|\partitionS'(u_i) \cap U| < 3$, then by previous paragraph there must be at least two other agents in $j, \ell \in A$ such that $|\partitionS'(u_j) \cap U| < 3$ and $|\partitionS'(u_{\ell}) \cap U| < 3$. If $k \geq 4$, since there are $3n$ agents in $U$ and $n + 1$ cliques in $\Wset$, there must be some clique $\Wset_s, s \in \{1, \dots, n + 1\}$ such that $|\partitionS'(\Wset_s) \cap U| \leq 3$. By Claim \ref{cl:constr_props}\eqref{cl:constr_propsB} and \eqref{cl:constr_propsAn}, every agent $x \in \{u_i, u_j, u_{\ell}\}$ prefers $\{u_i, u_j, u_{\ell}\} \cup \Wset_s$ to $\partitionS'(x)$. By Claim \ref{cl:constr_props}\eqref{cl:constr_propsBA} and \eqref{cl:constr_propsBn}, every agent in $\Wset_s$ prefers $\{u_i, u_j, u_{\ell}\} \cup \Wset_s$ to $\partitionS'(\Wset_s)$. Thus $\partitionS'$ is not stable. If $k = 3$, then by Claim \ref{cl:constr_props}\eqref{cl:constr_propsAn}, every agent $x \in \{u_i, u_j, u_{\ell}\}$ prefers $\{u_i, u_j, u_{\ell}\}$ to $\partitionS'(x)$, a contradiction.

Therefore, if $k \geq 4$, then every coalition in $\partitionS'$ containing vertices in $U$ must be of the form $U^3 \cup \Wset_s$, where $s \in \{1, \dots, n + 1\}$ and $U^3 \subseteq U, |U^3| = 3$. If $k = 3$, every coalition $\coalS \in \partitionS'$ must satisfy $|\coalS| = 3$. We construct a partition $\partitionS$ of $A$ as follows: For every $i \in A$, we set $\partitionS(i) = \{j \mid u_j \in U \cap \partitionS'(u_i)\}$. By previous reasoning, this must partition $A$ into triples.

Assume, towards a contradiction, that $\partitionS$ is not stable. Then there must be some triple $\coalS = \{i,j,\ell\}$ such that $\coalS$ blocks $\partitionS$. Let $\coalS' \coloneqq \{u_i, u_j, u_{\ell}\}$. 
If $k \geq 4$, we observe that since there are $3n$ agents in $U$ and $n + 1$ cliques in $B$, there must be some clique $\Wset_s, s \in \{1, \dots, n + 1\}$ such that $\partitionS(\Wset_s) \cap U = \emptyset$. By Claim \ref{cl:constr_props}\eqref{cl:constr_propsBn}, every agent in $\Wset_s$ prefers $\coalS' \cup \Wset_s$ to $\partitionS'(\Wset_s)$. By Claim \ref{cl:constr_props}\eqref{cl:constr_propsA}, every agent $u_x \in \coalS'$ strictly prefers $\coalS' \cup \Wset_s$ to $\partitionS'(u_x)$. Thus $\partitionS'$ is not stable, a contradiction. If $k = 3$, by Claim \ref{cl:constr_props}\eqref{cl:constr_propsA}, every agent $u_x \in \coalS'$ strictly prefers $\coalS'$ to $\partitionS'(u_x)$. Thus $\partitionS'$ is not stable, a contradiction.
\end{proof}

\subsection{Strict Core (SC)}
We first show that for every size limit, $k$, there is at least one graph where there is no $k$-bounded partition in the strict core.
Indeed, given a size limit $k$, we build the graph $G(V,E)$, which is a clique of size $k+1$. For every partition $P$ of $V$, let $S$ be a coalition in $P$ such that $|S| < k$. Now, any set of agents of size $k$ that also contains some $v \in S$ is a weakly blocking $k$-bounded coalition for $P$. 
%
Furthermore, even verifying the existence of the strict core is a hard problem.

\begin{definition}[$SC$ existence problem]
Given a coalition size limit $k$ and a graph $G$, decide whether a $k$-bounded partition in the strict core exists.
\end{definition}

For the hardness proof, we define for each $k \in \mathbb{N}$ the $Cliques_k$ problem, which is as follows.
\begin{definition}[$Cliques_k$]
Given an undirected and unweighted graph $G(V,E)$, decide whether $V$ can be partitioned into disjoint cliques, such that each clique is composed of exactly $k$ vertices.
\end{definition}
Clearly, $Cliques_2$ can be decided in polynomial time by computing a maximum matching of the graph $G$, $M$, and testing whether $|M| = \frac{|V|}{2}$.
However, $Cliques_k$ becomes hard when $k\geq 3$.
\begin{lemma}
\label{lem:1}
$Cliques_k$ is $NP$-Complete for every $k\geq3$.
\end{lemma}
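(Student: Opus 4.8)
The plan is to prove $NP$-completeness by first establishing membership in $NP$, which is immediate: given a candidate partition of $V$ into groups of size exactly $k$, we can verify in polynomial time that each group is indeed a clique (check all $\binom{k}{2}$ edges per group) and that the groups partition $V$. The substance lies in the hardness reduction, and the natural starting point is that $Cliques_3$ is exactly the classical problem of \emph{Partition Into Triangles}, which is known to be $NP$-complete (it appears in Garey and Johnson). So for the base case $k=3$ I would simply cite this. The remaining work is to handle all $k \geq 4$, and here I would proceed by reduction from $Cliques_3$ (or more generally a reduction chain).

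**The key idea for the inductive/gadget step** is to reduce $Cliques_{k-1}$ to $Cliques_k$, or directly reduce $Cliques_3$ to $Cliques_k$ for each fixed $k$. Given an instance $G(V,E)$ of the smaller problem, I want to construct a graph $G'$ that can be partitioned into $k$-cliques if and only if $G$ can be partitioned into $(k-1)$-cliques (resp. triangles). The cleanest construction I would try: for each vertex $v \in V$, attach a private gadget of new vertices, and introduce a shared ``filler'' structure, so that forming a $k$-clique forces exactly a $(k-1)$-clique from the original graph plus one designated extra vertex. A concrete attempt is to add, for every vertex $v \in V$, a single new apex vertex $a_v$ adjacent to $v$ and to nothing else useful---but this is too weak, so more likely I would add a set of $|V|$ new vertices forming a structure where each original $(k-1)$-clique must absorb exactly one new vertex to reach size $k$. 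The design goal is a \textbf{size/counting argument}: ensure $|V'|$ is divisible by $k$ and that the only way to tile $G'$ by $k$-cliques routes through a valid $(k-1)$-tiling of $G$.

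**The main obstacle** I anticipate is designing the gadget so that the extra vertices cannot ``cheat'' by forming $k$-cliques among themselves or by combining in unintended ways that do not correspond to a legitimate partition of the original graph. In other words, the hard direction is showing that \emph{any} $k$-clique partition of $G'$ induces a valid $(k-1)$-clique partition of $G$; I must rule out degenerate tilings that use the auxiliary vertices to ``rescue'' vertices of $V$ that were not part of a genuine clique. I would control this by making the auxiliary vertices mutually non-adjacent (an independent set) and adjacent only to specific original vertices, forcing each $k$-clique in $G'$ to contain at most one auxiliary vertex, hence at least $k-1$ original vertices that must already be mutually adjacent in $G$.

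**Alternatively**, a cleaner route avoiding an inductive gadget is a direct reduction from a known $NP$-complete problem such as $3$-dimensional matching or exact cover, or to invoke the general result that partitioning a graph into fixed-size cliques ($k$-clique cover with equal parts) is $NP$-complete for each $k \geq 3$; but since the paper has just defined $Cliques_k$ and proved the $k=2$ case is easy, I expect the intended argument is the reduction from $Cliques_3$ with the independent-set gadget above, with the counting/divisibility bookkeeping being the routine-but-delicate part I would defer.
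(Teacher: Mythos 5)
Your proposal follows essentially the same route as the paper: $NP$ membership is immediate; the base case $k=3$ is Partition Into Triangles, cited from Garey and Johnson; and the inductive step (the paper reduces $Cliques_k$ to $Cliques_{k+1}$; you propose $Cliques_{k-1}$ to $Cliques_k$, which is the same chain) adds an independent set of auxiliary ``filler'' vertices so that, by a counting argument, each clique of the new tiling contains exactly one auxiliary vertex and hence a clique of the original graph. There is, however, one concrete slip in the single parameter you do commit to: adding $|V|$ auxiliary vertices cannot work. By your own independent-set argument each $k$-clique of $G'$ contains at most one auxiliary, while a partition of the $2|V|$ vertices of $G'$ into $k$-cliques has only $2|V|/k$ parts, and $|V| \leq 2|V|/k$ forces $k \leq 2$; so for $k \geq 3$ most auxiliaries could never be covered and every instance would become a no-instance. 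The correct count is one auxiliary per clique of the intended tiling: for your step from $Cliques_{k-1}$ to $Cliques_k$ that is $m = |V|/(k-1)$ auxiliaries (solving $m = (|V|+m)/k$), matching the paper's $|\hat{V}| = |V|/k$ for its step from $Cliques_k$ to $Cliques_{k+1}$; instances where the divisibility fails are trivial no-instances. Relatedly, your phrase ``adjacent only to specific original vertices'' should be ``adjacent to \emph{all} original vertices,'' as in the paper: universal adjacency is what makes the easy direction go through, since every clique of a valid tiling of $G$ must be able to absorb an arbitrary leftover auxiliary vertex, while keeping the auxiliaries mutually non-adjacent preserves the hard direction exactly as you describe. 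Since you explicitly deferred this bookkeeping, the gap is minor, but as literally written the construction fails.
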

\begin{proof}
Clearly, $Cliques_k$ is $NP$ for every $k$.
We use induction to show that any $Cliques_k$ is $NP$-Hard for every $k\geq 3$.
$Cliques_3$ is known as the `partition into triangles' problem, which was shown to be $NP$-Complete \cite{garey1979computers}.
Given that $Cliques_k$ is $NP$-Hard we show that $Cliques_{k+1}$ is also $NP$-Hard.
Given an instance of the $Cliques_k$ on a graph $G(V,E)$, we construct the following instance. We build a graph $G'(V',E')$, in which we add a set of nodes $\hat{V} = {\hat{v}_1, ..., \hat{v}_{\frac{|V|}{k}}}$, i.e., $V' = V \cup \hat{V}$. If $e \in E$ then $e \in E'$, and for every $v \in V, \hat{v} \in \hat{V}$ we add $(v,\hat{v})$ to $E'$.
Clearly, $V$ can be partitioned into disjoint cliques with exactly $k$ vertices if and only if $V'$ can be partitioned into disjoint cliques with  exactly $k+1$ vertices.
\end{proof}

\begin{theorem}
\label{theorem:10}
The $SC$ existence problem is $NP$-hard.
\end{theorem}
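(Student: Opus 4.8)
The plan is to reduce from $Cliques_k$, which is $NP$-complete for every $k\geq 3$ by Lemma~\ref{lem:1}. Fix any $k\geq 3$ and let $G'(V',E')$ be an instance of $Cliques_k$; I would construct, in polynomial time, an \emph{unweighted} graph $G$ (with the same bound $k$) such that $G$ admits a $k$-bounded partition in the strict core if and only if $V'$ can be partitioned into $k$-cliques. The construction attaches to each vertex $v\in V'$ a \emph{private} gadget: a set $\Gamma_v=\{a^v_1,\dots,a^v_k\}$ of $k$ fresh vertices, with all edges inside $\Gamma_v$ and all edges between $v$ and $\Gamma_v$, and with the gadget vertices having no other neighbors. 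Thus $\{v\}\cup\Gamma_v$ induces a $(k+1)$-clique, a copy of the very $K_{k+1}$ that (as observed just before the theorem) has an empty strict core; morally, each gadget forces $v$ to be ``absorbed'' into the $G'$-part of the graph.

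For the forward direction I would take a $k$-clique partition of $G'$, keep those cliques, and place each $\Gamma_v$ as its own $k$-clique. Every agent then attains the maximum possible utility $k-1$, so no coalition can make any member strictly better off; hence the partition has no weakly blocking coalition and lies in the strict core.

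The backward direction is the crux. Given a strict-core partition $P$ of $G$, I would show that for every $v$ the set $\Gamma_v$ forms its own coalition while $v$ sits in a $k$-clique made up solely of $V'$-vertices; collecting these cliques yields a $k$-clique partition of $G'$. The engine is that whenever some vertex of $\{v\}\cup\Gamma_v$ has utility below $k-1$, the full set $\Gamma_v$ (a $k$-clique awarding each member $k-1$) is a weakly blocking coalition. I would split on the number $t$ of gadget vertices sharing $v$'s coalition $C_0$. If $t\geq 1$, then since $\{v\}\cup\Gamma_v$ has $k+1>k$ vertices some gadget vertex lies outside $C_0$, and being cut off from the $t\geq 1$ gadget vertices trapped with $v$ it cannot reach $k-1$, so $\Gamma_v$ weakly blocks $P$. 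If $t=0$, then either $\Gamma_v$ is split—so some gadget vertex is below $k-1$ and $\Gamma_v$ again weakly blocks—or $\Gamma_v$ is intact as its own $k$-clique, in which case $v$ must already have utility $k-1$ (otherwise $\{v,a^v_1,\dots,a^v_{k-1}\}$ weakly blocks), forcing $v$ into a clique of $V'$-vertices only.

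The main obstacle I anticipate is precisely this backward analysis: ruling out every configuration in which a strict-core partition satisfies a vertex $v$ through a clique that mixes $G'$-vertices with gadget vertices (the ``rescue'' configurations). A naive \emph{shared} forcing gadget would let leftover $V'$-vertices be completed into $k$-cliques and thus admit spurious stable partitions, breaking the reduction; making the gadgets private and of size exactly $k$ is what closes this gap, since a private gadget can never complete any other $V'$-vertex and, by the case analysis above, cannot even stably complete its own $v$. The remaining work is verifying this case analysis in full and the routine check that the reduction has polynomial size.
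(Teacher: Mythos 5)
Your proposal is correct and follows essentially the same route as the paper: a reduction from $Cliques_k$ that attaches to each original vertex a private gadget of $k$ new vertices exploiting the empty strict core of $K_{k+1}$, with the same forward direction (everyone attains the maximum utility $k-1$) and the same style of backward case analysis via weakly blocking coalitions. The only difference is cosmetic: your gadget is a full $K_{k+1}$ (a $k$-clique completely joined to $v$), whereas the paper uses $\hat{v}_x$ plus a $(k-1)$-clique with $\hat{v}_x$ and $v_x$ non-adjacent, i.e., $K_{k+1}$ minus one edge, and both gadgets support the same argument.
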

\begin{proof}
Given an instance of the $Cliques_k$ on a graph $G(V,E)$, we construct the following instance. We build a graph $G'(V',E')$ such that $V'$ contains all the nodes from $V$. In addition, for every $v_x \in V$ we add the nodes $\hat{v}_x$ and $v_x^1,\ldots,v_x^{k-1}$ to $V'$. Now, $E'$ contains all the edges of $E$, and for every $v_x \in V$ and $1 \leq i \leq k-1$ we add $(v_x,v_x^i), (v_x^i, \hat{v}_x)$ to $E'$. Finally, for every $v_x \in V$ and $1 \leq i,j \leq k-1$, $i \neq j$ we add $(v_x^i, v_x^j)$ to $E'$. 
We first show that if $G$ cannot be partitioned into disjoint cliques of size $k$, then the strict core is empty. Indeed, assume that $G$ cannot be partitioned into disjoint cliques of size $k$, and let $P$ be a $k$-bounded partition of $V'$. Then, there is at least one vertex $v_x \in V$ that belongs to a coalition $S\in P$, such that either:
    (1) $W(v_x,S) < k-1$, or
    (2) $v_x^i \in S$ for some $i$ between $1$ and $k-1$.
In case $1$, the coalition $\{v_x, v_x^1, \ldots, v_x^{k-1}\}$ is a weakly blocking $k$-bounded coalition. In case $2$, the coalition $\{\hat{v}_x, v_x^1, \ldots, v_x^{k-1}\}$ is a weakly blocking $k$-bounded coalition.
Therefore, if the strict core is not empty, then $G$ can be partitioned into disjoint cliques of size $k$.

We now show that if $G$ can be partitioned into disjoint cliques of size $k$, then the strict core is not empty. Clearly, in this case $G'$ can be partitioned into disjoint cliques of size $k$, and this partition is in the strict core. Therefore, if the strict core is empty, then $G$ cannot be partitioned into disjoint cliques of size $k$.
\end{proof}

    
    

\subsection{Contractual Strict Core (CSC)}
We show that the CSC is never empty. Indeed, given any $(G,k)$, the following algorithm finds a $k$-bounded partition in the CSC:
\begin{enumerate}
    \item Start with a $k$-bounded partition $P$, where all the agents are singletons.
    \item Iterate over all the coalitions in $P$ until two coalitions, $S_1, S_2$, are found, such that $|S_1|+|S_2| \leq k$ and $u(P) < u(P^{-S_1 \cup S_2})$.
    \item Update $P$ to be $P^{-S_1 \cup S_2}$, and return to step $(2)$.
\end{enumerate}
The algorithm terminates when step $2$ does not find two coalitions that meet the required conditions.

 \begin{theorem}
There always exists a $k$-bounded partition in the CSC, and it can be found in polynomial time.
\end{theorem}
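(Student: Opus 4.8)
The plan is to establish two properties of the proposed algorithm: that it halts in polynomial time, and that its output lies in the CSC. For termination I would observe that every execution of step~$3$ replaces two coalitions $S_1,S_2$ by their union $S_1\cup S_2$, strictly decreasing the number of coalitions by one. Starting from $n$ singletons there can be at most $n-1$ merges, and each pass of step~$2$ inspects $O(n^2)$ pairs, computing the welfare change $u(P^{-S_1\cup S_2})-u(P)=2W(S_1,S_2)$ (twice the total weight of the edges crossing between $S_1$ and $S_2$) in polynomial time. This bound does not rely on integrality, so it covers the weighted setting as well.

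The heart of the argument is a \emph{connectivity invariant}, which I would prove by induction on the merges: every coalition the algorithm ever holds is connected in $G$. Singletons are connected, and a merge of $S_1,S_2$ is performed only when $u(P^{-S_1\cup S_2})>u(P)$, i.e.\ only when $W(S_1,S_2)>0$, which (since all weights are positive) means there is an edge joining $S_1$ and $S_2$; the union of two connected subgraphs linked by an edge is connected. Hence every coalition of the final partition $P$ is connected.

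For correctness I would argue by contradiction: suppose $P$ is not in the CSC, witnessed by a weakly blocking coalition $S$ for which no agent is harmed, i.e.\ $u(v,P^{-S})\geq u(v,P)$ for every $v$. For $v\notin S$ lying in a coalition $C$, we have $u(v,P^{-S})=u(v,P)-W(v,S\cap C)$, so the no-harm condition forces $W(v,S\cap C)=0$; thus within each coalition $C$ meeting $S$ there is no edge between $S\cap C$ and $C\setminus S$. Together with the connectivity of $C$, this implies $C\setminus S=\emptyset$, i.e.\ $C\subseteq S$, so $S$ is exactly a union of whole coalitions $C_1,\dots,C_m$ of $P$. Since weak blocking requires a strict improvement for some member, $S$ cannot coincide with a single coalition (which would give $P^{-S}=P$), so $m\geq 2$. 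The strict part of weak blocking then yields some $v^\ast\in C_{i^\ast}$ with a positive-weight edge to another coalition $C_j\subseteq S$, whence $W(C_{i^\ast},C_j)>0$ while $|C_{i^\ast}|+|C_j|\leq |S|\leq k$. But then step~$2$ would have located this beneficial merge, contradicting termination; therefore $P$ is in the CSC.

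I expect the crux to be establishing and then exploiting the connectivity invariant: it is precisely what rules out ``internally disconnected'' coalitions that could otherwise permit a subset of a coalition to break off harmlessly, and it is what lets me convert an alleged CSC violation into a pairwise merge the algorithm should already have performed. The remaining pieces — the welfare identity for a merge and the decomposition of $u(v,P^{-S})$ for a break-off — are routine.
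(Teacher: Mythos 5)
Your proof is correct and takes essentially the same route as the paper's: greedily merge pairs of coalitions while the welfare strictly increases, maintain the invariant that every coalition is connected, and use termination to rule out any weakly blocking coalition that harms nobody. The only difference is one of rigor — you prove the connectivity invariant by induction and make explicit the step the paper leaves implicit, namely that a harmless weakly blocking set must be a union of whole coalitions of $P$ and therefore would yield a beneficial \emph{pairwise} merge (of size at most $k$) that the algorithm would already have performed.
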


\begin{proof}
At each iteration, the number of the coalitions in $P$ decreases and thus the algorithm must terminate after at most $k-1$ iterations.
Consider the $k$-bounded partition $P$ when the algorithm terminates. Clearly, there are no two coalitions in $P$ that can benefit from breaking off and joining together.
In addition, observe that every coalition $S \in P$ is a connected component. Thus, no coalition $S' \subsetneq S$ can break off without decreasing the utility of at least one agent from $S\setminus S'$. 
Therefore, $P$ is in the CSC.   
\end{proof}

\section{Conclusions and Future Work}
In this paper, we study ASHGs with a bounded coalition size. We provide MnM, an approximation algorithm for maximizing the utilitarian social welfare and study the computational aspects of 
the core, the SC, and the CSC. 
We note that MnM can be improved by running the algorithm and iteratively joining together any two coalitions that improve the social welfare (without violating the size constraint). This improved version is guaranteed to find a partition that is in the CSC while maintaining the approximation ratio for the MaxUtil problem. Unfortunately, this improvement does not result in an improved approximation ratio when $k=3$, and whether it improves the approximation ratio when $k>3$ remains an open problem. Generally, providing an inapproximability result or a better approximation algorithm for the MaxUtil problem is an important open problem.
Furthermore, the existence of the core in the unweighted setting when $k>3$ is an essential open problem.

In addition, there are several interesting directions for extending our work. 
Since the MaxUtil problem is computationally hard,  it will be interesting to investigate some variants. For example, the problem of finding a $k$-bounded partition, such that each agent will be matched with at least one friend in its coalition.
%
Another interesting research direction is to incorporate skills in our model, motivated by coalitional skill games \cite{bachrach2008coalitional}. That is, each agent has a set of skills, and each coalition is required to have at least one agent that acquires each of the skills.
\begin{acks}
This research has been partly supported by the Ministry of Science, Technology \& Space (MOST), Israel.
\end{acks}

\bibliographystyle{ACM-Reference-Format}
\bibliography{ridesharing}

\appendix

\section{Additional Results}
\subsection{Tightness of the Approximation Ratio of the MnM Algorithm in the Unweighted Setting}
We show that our approximation ratio in the unweighted setting is tight. 

\begin{figure}[hbpt]
\centering
\begin{tikzpicture}[scale=0.65,node distance={12.5mm}, main/.style = {draw, circle, scale = 0.8}]
    \foreach \phi in {1,...,10}{
        \node[main] (v_\phi) at (72+ 360/10 * \phi:3cm){$v_{\phi}$};
    }
    \foreach \phi in {1,...,4}{
        \foreach \alpha in {6,...,10}{
            \draw (v_\phi) -- (v_\alpha);
        }
    }
    \foreach \alpha in {7,...,10}{
            \draw (v_5) -- (v_\alpha);
        }
    \foreach \i\j in {1/2,3/4,7/8,9/10}{
        \draw [blue, ultra thick, dotted](v_\i) -- (v_\j);
        }
    \foreach \i\j in {1/3,2/3,3/5,4/5}{
        \foreach \alpha in {\j,...,5}{
            \draw [blue, ultra thick] (v_\i) -- (v_\alpha);
        }
    }
    \foreach \i\j in {6/7,7/9,8/9}{
        \foreach \alpha in {\j,...,10}{
            \draw [blue, ultra thick] (v_\i) -- (v_\alpha);
        }
    }
    \draw [ultra thick, dotted](v_5) -- (v_6);
\end{tikzpicture}
\caption{An example of a graph in which $k=5$, and MnM achieves an approximation ratio of exactly $\frac{1}{4}$.}
\label{fig:complete_graph}
\end{figure}
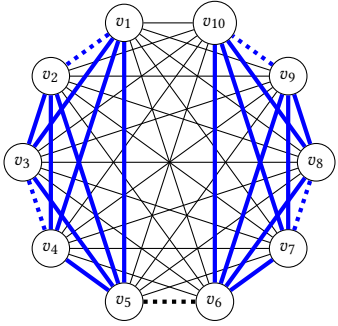

\begin{theorem}
The approximation ratio of MnM for the MaxUtil problem in the unweighted setting is tight. 
\end{theorem}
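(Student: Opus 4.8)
The plan is to exhibit, for every $k\geq 3$, an instance on which MnM returns a partition whose social welfare equals exactly $\frac{1}{k-1}$ of the optimum; this matches the lower bound proved above and hence shows the analysis cannot be improved. The instance I would use is the complete graph $G=K_{2k}$ on $2k$ vertices, which for $k=5$ is exactly the graph of Figure~\ref{fig:complete_graph} (drawn there as two $k$-cliques joined by a complete bipartite graph, with a perfect matching highlighted).

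First I would pin down the optimum. Since $u(P)$ equals twice the number of edges that lie inside the coalitions of $P$, and since over $2k$ vertices no $k$-bounded partition can enclose more than $2\binom{k}{2}=k(k-1)$ edges, the optimum is attained by splitting $V$ into two cliques of size $k$, giving $u(Opt)=2k(k-1)$.

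Next I would trace the run of MnM. As $K_{2k}$ is complete and has an even number of vertices, every maximum matching $M_1$ found in the first round is perfect and has size $k$; after the merge step $V_2$ consists of $k$ unified nodes of size two and contains no unmatched (single) vertex. The crucial point, which I would argue directly from the edge-construction of Algorithm~\ref{alg:MnM} (lines~\ref{line:marge_edges_s}--\ref{line:marge_edges_e}), is that a newly created unified node is joined only to vertices that were adjacent in $E_l$ to one of its constituents; because $E_1$ contains single vertices only, two unified nodes are never connected---exactly as in Figure~\ref{fig:G_2}, where $v_{1,2}$ and $v_{3,4}$ are non-adjacent although $v_2v_3\in E$. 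Consequently $G_2$ has no edges whatsoever, every subsequent matching is empty, and MnM terminates returning the $k$ size-two coalitions determined by $M_1$, so $u(P)=2k$.

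Putting the pieces together gives $\frac{u(P)}{u(Opt)}=\frac{2k}{2k(k-1)}=\frac{1}{k-1}$, as required. I expect the main obstacle to be the middle step: one must show that MnM truly gets stuck after the first round rather than coalescing the matched pairs into the size-$k$ cliques that the optimum uses. This is not forced by the size bound but by the way $G_{l+1}$ is built---unified nodes are linked only to single vertices---so the disappearance of all single vertices after a perfect matching is precisely what freezes the algorithm. I would finally remark that the argument does not depend on how ties among maximum matchings are resolved, since on $K_{2k}$ every maximum matching is perfect, which makes the tight instance robust.
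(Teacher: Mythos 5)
Your proposal is correct and follows essentially the same approach as the paper: the same tight instance, the complete graph on $2k$ vertices, with $u(Opt)=2k(k-1)$ from two $k$-cliques versus $u(P)=2k$ from the perfect matching, yielding the ratio $\frac{1}{k-1}$ exactly. The only difference is that you explicitly verify the step the paper leaves implicit---that after a perfect first-round matching the merge construction connects unified nodes only to nodes of $V_l$, so $G_2$ is edgeless and MnM cannot coalesce pairs further---which is a correct and worthwhile elaboration rather than a departure.
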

\begin{proof}
Given $k>2$, consider a complete graph of size $2k$. In this case, MnM finds a perfect matching in $M_1$, and thus the partition $P$ returned by MnM contains $k$ groups of $2$ nodes. That is, $u(P) = 2k$. On the other hand, an optimal $k$-bounded partition $Opt$ consists of $2$ Cliques of size $k$, and thus $u(Opt) = 2k(k-1)$. That is, MnM provides an approximation of exactly $\frac{1}{k-1}$.
\end{proof}
Figure \ref{fig:complete_graph} presents a case where $k=5$, and $G$ is a complete graph with $10$ nodes. Here, $P=\{\{v_1,v_2\},\{v_3,v_4\},\{v_5,v_6\},\{v_7,v_8\},\{v_9,v_{10}\}\}$, as shown in the dotted lines, and thus $u(P) = 10$ . However, $Opt = \{\{v_1,v_2,v_3,v_4,v_5\}, \{v_6,v_7,v_8,v_9,v_{10}\}\}$, as shown in the blue lines, and $u(Opt) = 40$.

\subsection{MaxUtil with Distributed Partition in Unweighted Setting}

We analyze a procedure that attempts to model the behavior of the agents when there is no central mechanism that determines the partition. Assume that the agents are split up arbitrarily but maximally, i.e., in a way that no two coalitions can joint together such that the social welfare will be higher. We call this procedure $Arbmax$. 
Without loss of generality, we assume that every set $S \in Arbmax$ is a connected component.
We show that $\frac{1}{k}$ is an upper bound on the approximation ratio that $Arbmax$ may guarantee.
\begin{theorem}
For any $k$, $Arbmax$ cannot guarantee an approximation ratio better than $\frac{1}{k}$
\end{theorem}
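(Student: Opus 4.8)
The plan is to prove this lower bound by exhibiting, for every $k$, a single graph together with one specific maximal ($Arbmax$) partition whose social welfare is exactly a $\frac{1}{k}$ fraction of the welfare of some feasible partition, so that no guarantee stronger than $\frac{1}{k}$ can hold. The construction I would use has two kinds of vertices: $k$ \emph{hub} vertices $h_1,\dots,h_k$ joined in a path $h_1 - h_2 - \cdots - h_k$ (so the subgraph induced by the hubs is connected and has exactly $k-1$ edges), together with $k(k-1)$ \emph{spoke} vertices, each adjacent to all $k$ hubs and to no other spoke.

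The bad $Arbmax$ partition $P$ I would analyze consists of the single coalition $\{h_1,\dots,h_k\}$ plus every spoke taken as its own singleton. First I would verify that this is a legitimate $Arbmax$ outcome. The hub coalition has size exactly $k$, so it cannot be merged with anything without violating the size bound; each spoke is adjacent only to hubs, which already sit together in a size-$k$ coalition, so joining a spoke to them would create a coalition of size $k+1$; and two spokes are non-adjacent, so merging them yields no gain. Hence no pair of coalitions can be merged beneficially, and every coalition is connected, matching the WLOG assumption on $Arbmax$. Its welfare is $u(P) = 2(k-1)$, arising only from the hub path: the two endpoints contribute utility $1$ each, the $k-2$ interior hubs contribute $2$ each, and every spoke has utility $0$.

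To finish I only need a lower bound on the optimum, which I would get by exhibiting one better partition rather than computing $u(Opt)$ exactly: group each hub $h_i$ with a distinct block of $k-1$ spokes, producing $k$ coalitions of size $k$. In each such coalition the hub is adjacent to all $k-1$ of its spokes and each spoke is adjacent to the hub, so that coalition contributes $2(k-1)$, giving $u(Opt) \ge 2k(k-1)$. Combining the two estimates yields $\frac{u(P)}{u(Opt)} \le \frac{2(k-1)}{2k(k-1)} = \frac{1}{k}$, and since $Arbmax$ may legitimately return $P$, it cannot guarantee an approximation ratio better than $\frac{1}{k}$.

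I expect the only genuine obstacle to be the verification that $P$ is truly an $Arbmax$ outcome, namely that no single merge of two coalitions strictly increases the welfare; this is exactly what pins down the design choices (spokes adjacent to all hubs but not to one another, and a hub coalition of size exactly $k$ so that it is frozen). I would stress that the exact value of $u(Opt)$ is unnecessary for the claim, although one can confirm optimality of the improved partition via a counting bound: every hub lies in a coalition with at most $k-1$ other vertices, and each captured edge is incident to at least one hub, so at most $k(k-1)$ edges can ever be captured.
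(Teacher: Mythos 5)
Your proposal is correct and takes essentially the same approach as the paper: $k$ hub vertices joined in a path, a bad maximal partition placing all hubs in one size-$k$ coalition (welfare $2(k-1)$, frozen because any merge either violates the size bound or joins two non-adjacent spokes), and a good partition grouping each hub with $k-1$ spokes (welfare at least $2k(k-1)$), yielding the ratio $\frac{1}{k}$. The only difference is cosmetic: the paper attaches each of its $k(k-1)$ pendant vertices to a single hub (disjoint stars) rather than to all $k$ hubs as you do, which makes the instance slightly sparser but leaves the verification of maximality and both welfare computations unchanged.
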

\begin{proof}
Given $k$, consider the following graph $G$. There are $k$ distinguished nodes, $v_1,\ldots,v_k$, with the edges  $(v_i,v_{i+1})\in E$ for $i=1,\ldots,k-1$. Each distinguished node $v_i$ has $k-1$ additional neighbors that are connected only to $v_i$, i.e., $v_i$ is the internal node of a star graph with $k-1$ leaves. Clearly, $Opt$ consists of $k$ coalitions, where each coalition consists of a star graph. Thus, $u(Opt)=2k(k-1)$. On the other hand, $Arbmax$ may partition the graph such that the distinguished nodes $v_1,\ldots,v_k$ are in the same coalition. Since there are no edges between two undistinguished nodes, the social welfare of this partition is $2(k-1)$. Therefore, $Arbmax$ cannot guarantee an approximation ratio better than $\frac{1}{k}$.
\end{proof}
Figure \ref{fig:Arbmax} presents a case where $k=5$, and $Arbmax$ may provide only a $\frac{1}{5}$ of the social welfare provided by an optimal solution. Here, $Arbmax$ may return the partition 
$P' = \{\{v_1, v_2, v_3, v_4, v_5\}, \{v_6\}, \{v_7\}, $ $ \{v_8\}, \{v_9\}, \{v_{10}\}, \{v_{11}\}, \{v_{12}\}, \{v_{13}\}, \{v_{14}\}, \{v_{15}\}, \{v_{16}\},$ $ \{v_{17}\}, \{v_{18}\},$ \, $ \{v_{19}\}, \{v_{20}\}, \{v_{21}\}, \{v_{22}\}, \{v_{23}\}, \{v_{24}\}, \{v_{25}\}\}$ and thus $u(P') = 8$, while $Opt = \{\{v_1, v_6, v_7, v_8, v_9\}, \{v_2, v_{10}, v_{11}, v_{12}, v_{13}\}, \{v_3, v_{14}, v_{15}, $ \, $ v_{16}, v_{17}\}, \{v_4, v_{18}, v_{19}, v_{20}, v_{21}\}, \{v_5, v_{22}, v_{23}, v_{24}, v_{25}\}\}$ and therefore $u(Opt) = 40$.

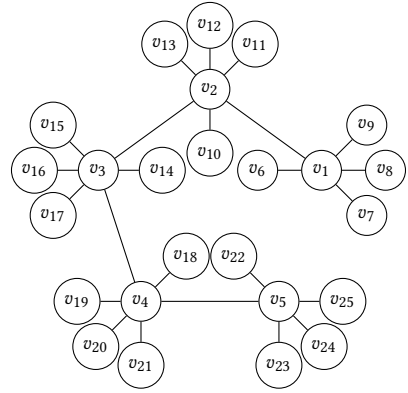
\begin{figure}
\centering
\begin{tikzpicture}[scale=0.65,node distance={10mm}, main/.style = {draw, circle, scale = 0.85}]
    \foreach \phi in {1,...,5}{
        \node[main] (\phi) at (-54 + 360/5 * \phi:2.4cm){$v_{\phi}$};
    }
    \node[main] (6) [left of=1] {$v_6$}; 
    \node[main] (7) [below right of=1] {$v_7$}; 
    \node[main] (8) [right of=1] {$v_8$}; 
    \node[main] (9) [above right of=1] {$v_9$}; 
    \node[main] (10) [below of=2] {$v_{10}$}; 
    \node[main] (11) [above right of=2] {$v_{11}$}; 
    \node[main] (12) [above of=2] {$v_{12}$}; 
    \node[main] (13) [above  left of=2] {$v_{13}$};
    \node[main] (14) [right of=3] {$v_{14}$}; 
    \node[main] (15) [above left of=3] {$v_{15}$}; 
    \node[main] (16) [left of=3] {$v_{16}$}; 
    \node[main] (17) [below left of=3] {$v_{17}$}; 
    \node[main] (18) [above right of=4] {$v_{18}$}; 
    \node[main] (19) [left of=4] {$v_{19}$}; 
    \node[main] (20) [below left of=4] {$v_{20}$}; 
    \node[main] (21) [below of=4] {$v_{21}$}; 
    \node[main] (22) [above left of=5] {$v_{22}$}; 
    \node[main] (23) [below of=5] {$v_{23}$}; 
    \node[main] (24) [below right of=5] {$v_{24}$}; 
    \node[main] (25) [right of=5] {$v_{25}$}; 
    \draw (1) -- (2);
    \draw (2) -- (3);
    \draw (3) -- (4);
    \draw (4) -- (5);
    \draw (1) -- (6);
    \draw (1) -- (7);
    \draw (1) -- (8);
    \draw (1) -- (9);
    \draw (2) -- (10);
    \draw (2) -- (11);
    \draw (2) -- (12);
    \draw (2) -- (13);
    \draw (3) -- (14);
    \draw (3) -- (15);
    \draw (3) -- (16);
    \draw (3) -- (17);
    \draw (4) -- (18);
    \draw (4) -- (19);
    \draw (4) -- (20);
    \draw (4) -- (21);
    \draw (5) -- (22);
    \draw (5) -- (23);
    \draw (5) -- (24);
    \draw (5) -- (25);
\end{tikzpicture}
\caption{A case where $k=5$, and $Arbmax$ may provide only a $\frac{1}{5}$ of the social welfare provided by an optimal solution.}
\label{fig:Arbmax}
\end{figure}

\subsection{Worst Case Example for the MnM Algorithm in the Weighted Setting}
We show that in the weighted setting, MnM cannot guarantee a better approximation ratio. Indeed, Figure \ref{fig:MnM} shows an example where $k=3$ and MnM $=\{\{v_1,v_2\}, \{v_3, v_4\}, \{v_5, v_6\}\}$. In this case, the outcome of MnM provides only a $\frac{2+\epsilon}{6}$ of the social welfare provided by the optimal solution $\{\{v_1, v_2, v_3\}, \{v_4, v_5, v_6\}\}$. Therefore, for any $\epsilon>0$, when $k=3$, MnM cannot provide an approximation that is better than $\frac{1}{k}+\epsilon$.

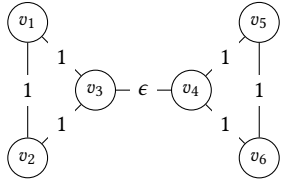
\begin{figure}
\centering
\begin{tikzpicture}[scale=0.65,node distance={15mm}, main/.style = {draw, circle, scale = 0.85}]
    \node[main] (1) {$v_1$};
    \node[main] (3) [below right of=1] {$v_3$}; 
    \node[main] (2) [below left of=3] {$v_2$}; 

    \node[main] (4) [right of=3] {$v_4$};     
    \node[main] (5) [above right of=4] {$v_5$}; 
    \node[main] (6) [below right of=4] {$v_6$}; 
    
    \tikzset{mystyle/.style={color=black}} 
    \tikzset{every node/.style={fill=white}}
    \path   (1)     edge [mystyle]    node  {$1$} (2)
            (2)     edge [mystyle]    node  {$1$} (3)
            (1)     edge [mystyle]    node  {$1$} (3)
            (4)     edge [mystyle]    node  {$1$} (5)
            (4)     edge [mystyle]    node  {$1$} (6)
            (5)     edge [mystyle]    node  {$1$} (6)
            (3)     edge [mystyle]    node  {$\epsilon$} (4);
\end{tikzpicture}
\caption{A case where $k=3$, and MnM provides only a $\frac{2+\epsilon}{6}$ of the social welfare provided by an optimal solution.}
\label{fig:MnM}
\end{figure}

\subsection{Approximation for the Core in the Unweighted Setting}
We now investigate additive and multiplicative approximations of the core, which are defined as follows. 
\begin{definition}[Additive approximation]
A $k$-bounded coalition $S$ is said to \emph{$\epsilon_a$-strongly block} a $k$-bounded partition $P$ if it improves the utility of each of its members by more than an additive factor of $\epsilon_a$. That is, for every $v \in S$, $W(v,S) > u(v,P) + \epsilon_a$.
A $k$-bounded partition $P$ is in the \emph{$\epsilon_a$-core} if it does not have any $\epsilon_a$-strongly blocking $k$-bounded coalitions.
\end{definition}

The \emph{$\epsilon_m$-core}, which is the multiplicative approximation of the core is defined similarly. That is, a $k$-bounded coalition $S$ is said to \emph{$\epsilon_m$-strongly block} a $k$-bounded partition $P$ if for every $v \in S$, $W(v,S) > \epsilon_m \cdot u(v,P)$.



We now show that for $k>3$ the $\epsilon_a$-core, for $\epsilon_a = \lfloor \frac{k}{2} \rfloor -1$, is never empty. We present Algorithm \ref{alg:Addcore}, a polynomial time algorithm that finds a $k$-bounded partition $P$ in the $\epsilon_a$-core. The algorithm begins with all agents in singletons and iteratively considers for each $k$-bounded coalition whether it $\epsilon_a$-strongly blocks the current partition.

\begin{algorithm}[ht]
    \caption{Finding a $k$-bounded partition in the $\epsilon_a$-core}
    \label{alg:Addcore}
    \SetAlgoLined
    \textbf{Input}:
    A graph $G(V,E)$\\
    A limit $k$\\
    An additive factor $\epsilon_a$\\
    \KwResult{A $k$-bounded partition $P$ of $V$ in the $\epsilon_a$-core.} 
    $P \leftarrow \{\{v\}$ for every $v \in V\}$\\
    $V' \leftarrow V$\\
    outerLoop:\\ 
    \label{Addcore:outerLoop}
        \For{$S \subset V'$, such that $1<|S|\leq k$}{
            \If{$\forall v \in S, W(v ,S) > u(v,P) + \epsilon_a$}{
            \label{Addcore:if}
                $P \leftarrow P^{-S}$ \\
                \If{$\forall v \in S, W(v ,S) \geq k-1-\epsilon_a$}{
                    $V' \leftarrow V' \setminus S$\\ \label{Addcore:line:remove}
                }
                \textbf{goto} outerLoop
            }
        }
        \textbf{return} P
\end{algorithm}

\begin{theorem}
For $\epsilon_a = \lfloor \frac{k}{2} \rfloor -1$, there always exists a $k$-bounded partition in the $\epsilon_a$-core, and it can be found in polynomial time.
\end{theorem}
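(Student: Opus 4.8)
The plan is to establish two things separately: that whenever Algorithm~\ref{alg:Addcore} halts its output lies in the $\epsilon_a$-core, and that it must halt after polynomially many iterations. Throughout I write $\epsilon_a=\lfloor k/2\rfloor-1$ and record the identity $k-1-\epsilon_a=\lceil k/2\rceil$, which is the utility threshold used in the removal test on line~\ref{Addcore:line:remove}. I also use the elementary fact that in the unweighted setting $W(v,S)\le |S|-1\le k-1$ for every coalition, so no agent can ever attain utility exceeding $k-1$.

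For correctness I would first argue that a vertex removed from $V'$ can never afterwards belong to an $\epsilon_a$-strongly blocking coalition. When a set $S$ is removed (line~\ref{Addcore:line:remove}) every $v\in S$ has $W(v,S)\ge k-1-\epsilon_a=\lceil k/2\rceil$, and at that moment $S$ becomes its own coalition. Since every coalition broken off thereafter is a subset of $V'$, it is disjoint from $S$, so $S$ survives untouched and each of its members keeps utility $u(v,P)=W(v,S)\ge\lceil k/2\rceil$ for the rest of the run. If such a $v$ were in some blocking coalition $S'$ we would need $W(v,S')>u(v,P)+\epsilon_a\ge (k-1-\epsilon_a)+\epsilon_a=k-1$, contradicting $W(v,S')\le k-1$. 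Consequently any blocking coalition must lie entirely inside $V'$; but on termination the loop has inspected every $S\subset V'$ with $1<|S|\le k$ and found none, so no blocking coalition exists in $V$ and $P$ is in the $\epsilon_a$-core.

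The substance is the termination bound, which I would prove through a potential argument on the social welfare $u(P)$. Suppose a coalition $S$ is broken off. Writing $g$ for the number of edges inside $S$ whose endpoints previously lay in different coalitions and $L$ for the number of edges joining $S$ to the remainders of its members' old coalitions, a direct count gives $u(P^{-S})-u(P)=2(g-L)$. The blocking condition $W(v,S)\ge u(v,P)+\epsilon_a+1=u(v,P)+\lfloor k/2\rfloor$ summed over $S$ yields $2g-L\ge |S|\lfloor k/2\rfloor$, while the same condition together with $W(v,S)\le k-1$ forces $u(v,P)\le\lceil k/2\rceil-1$ for each $v\in S$, hence $L\le |S|(\lceil k/2\rceil-1)$. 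Combining, $u(P^{-S})-u(P)=2(g-L)\ge |S|\big(\lfloor k/2\rfloor-\lceil k/2\rceil+1\big)\ge 0$, so $u(P)$ never decreases. Moreover this lower bound equals $|S|$ when $k$ is even, and when $k$ is odd equality to $0$ forces every inequality tight, giving $W(v,S)=k-1$ for all $v\in S$, i.e. $S$ is a $k$-clique, which satisfies the removal test and deletes $S$ from $V'$. Thus in every iteration either $u(P)$ strictly increases (and, being an even integer, increases by at least $2$) or a set of size at least two is removed from $V'$. Since $0\le u(P)\le 2|E|$ there are at most $|E|$ welfare-increasing iterations, and since removed sets are pairwise disjoint there are at most $|V|/2$ removal iterations, for $O(|E|+|V|)$ iterations in all. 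Each iteration scans $O(n^{k})$ candidate coalitions and tests each in $O(k^{2})$ time, so the whole run is polynomial for fixed $k$.

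The main obstacle I anticipate is the odd-$k$ borderline case in the potential argument: there the welfare lower bound collapses to $0$, and the entire termination claim hinges on showing that this can occur only when $S$ is a full $k$-clique that is then removed from $V'$. Making this airtight requires the careful two-sided edge accounting through the quantities $g$ and $L$, and crucially the exact value $\epsilon_a=\lfloor k/2\rfloor-1$, so that $\lfloor k/2\rfloor$ and $\lceil k/2\rceil-1$ coincide; a looser choice of $\epsilon_a$ would break the non-decrease of $u(P)$ and hence the iteration bound along with it.
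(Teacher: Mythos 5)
Your proof is correct and takes essentially the same route as the paper's: you analyze the same Algorithm \ref{alg:Addcore}, showing (i) that vertices removed from $V'$ retain utility at least $k-1-\epsilon_a=\lceil k/2\rceil$ and hence can never again be $\epsilon_a$-blocked, and (ii) that the potential $u(P)$ never decreases, with stagnation (possible only for odd $k$) forcing the blocking set to be a $k$-clique that is then removed, yielding the same polynomial iteration bound. The only difference is bookkeeping: your exact identity $u(P^{-S})-u(P)=2(g-L)$ is a sharper version of the paper's one-sided estimates (decrease at most $2\sum_{v\in S}u(v,P)$, increase at least $\sum_{v\in S}\bigl(u(v,P)+\lfloor k/2\rfloor\bigr)$), but the conclusions drawn coincide.
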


\begin{proof}
Consider Algorithm \ref{alg:Addcore}.
Note that for every $k$-bounded partition $P$, and $S \in P$, if for every $v \in S$, $W(v ,S) \geq k-1-\epsilon_a$ then every $v \in S$ cannot belong to any $\epsilon_a$-strongly blocking $k$-bounded coalition.
Therefore, Algorithm~\ref{alg:Addcore} removes such vertices from $V'$ (in line~\ref{Addcore:line:remove}).
Clearly, if Algorithm \ref{alg:Addcore} terminates, the $k$-bounded partition $P$ is in the $\epsilon_a$-core.
We now show that for $\epsilon_a = \lfloor \frac{k}{2} \rfloor -1$, Algorithm \ref{alg:Addcore} must always terminate, and it runs in polynomial time.

First, we show that $u(P)$ never decreases.
The algorithm initiates a new iteration (line  \ref{Addcore:outerLoop}) whenever the if statement in line \ref{Addcore:if} is true. This can happen only if for every $v$ in the blocking coalition $S$, $u(v,P)$ is less than $k-1-\lfloor \frac{k}{2} \rfloor + 1$ (since $W(v ,S)$ is at most $k-1$). Therefore, $u(v,P) < k-\frac{k}{2}+\frac{1}{2} = \frac{k}{2}+\frac{1}{2}$.
Since $u(v,P)$ is a natural number, $u(v,P) \leq \frac{k}{2}-\frac{1}{2}$.
When $S$ breaks off, the social welfare decreases by at most $2 \cdot \sum\limits_{v\in S} u(v,P)$, and increases by at least $\sum\limits_{v\in S} (u(v,P)+\lfloor \frac{k}{2} \rfloor)$.
Since $\sum\limits_{v\in S} (u(v,P)+\lfloor \frac{k}{2} \rfloor) \geq \sum\limits_{v\in S} (u(v,P)+ \frac{k}{2} - \frac{1}{2}) \geq \sum\limits_{v\in S} (u(v,P)+ u(v,P)) = 2 \cdot \sum\limits_{v\in S} u(v,P)$, then $u(P)$ never decreases.

Next, we show that if $u(P)$ remains the same, then vertices are removed from further consideration (in line~\ref{Addcore:line:remove}). Observe that $u(P)$ remaining the same entails that $2 \cdot \sum\limits_{v\in S} u(v,P) = \sum\limits_{v\in S} (u(v,P)+\lfloor \frac{k}{2} \rfloor)$.
That is, $\sum\limits_{v\in S} u(v,P) = \sum\limits_{v\in S} \lfloor \frac{k}{2} \rfloor$. Recall that for every $v \in S$, $u(v,P) \leq \frac{k}{2}-\frac{1}{2}$ and note that $\frac{k}{2}-\frac{1}{2} \leq \lfloor \frac{k}{2} \rfloor$. Therefore, if $u(P)$ remains the same, then for every $v \in S$, $u(v,P) = \frac{k}{2}-\frac{1}{2}$. 
Now, since for every $v \in S$,  $W(v,S) > u(v,P) + \lfloor \frac{k}{2} \rfloor -1$ and $\epsilon_a \geq 1$, then $W(v,S) > \frac{k}{2} - \frac{1}{2} + \frac{k}{2} - \frac{1}{2} - 1 = k-2 \geq k-1-\epsilon_a$.
Therefore, all $v \in S$ are removed from further consideration. 

Overall, either $u(P)$ has increased by at least $2$ or vertices are removed from further consideration.
Since $u(P)$ is bounded by $2|E|$ and the number of vertices is finite, the algorithm must terminate after at most $|E|+\frac{|V|}{k}$ iterations.
\end{proof}

Next, we show that for $k>3$ the $\epsilon_m$-core, for $\epsilon_m = 2$, is never empty. We use Algorithm \ref{alg:Addcore}, with the following changes:
%
%
\begin{itemize}
    \item The input of the algorithm is $\epsilon_m$ instead of $\epsilon_a$ (in line $3$).
    \item In line $8$, we check if for every $v$ in $S$, $W(v ,S) > \frac{u(v, P)}{\epsilon_m}$.
    \item In line $10$, we check if for every $v$ in $S$, $W(v,S) \geq \frac{k-1}{\epsilon_m}$.
\end{itemize}
We show that this algorithm finds a $k$-bounded partition $P$ in the $\epsilon_m$-core in polynomial time.

\begin{theorem}
\label{theorem:9}
For $\epsilon_m = 2$, there always exists a $k$-bounded partition in the $\epsilon_m$-core, and it can be found in polynomial time.
\end{theorem}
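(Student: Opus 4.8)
The plan is to run the proof exactly parallel to the preceding additive ($\epsilon_a$) theorem, simply replacing the additive slack by the multiplicative factor $\epsilon_m=2$ everywhere. I analyze the stated variant of Algorithm~\ref{alg:Addcore}: its blocking test forms a coalition $S$ whenever $W(v,S)>\epsilon_m\cdot u(v,P)$ for every $v\in S$, and its freezing test removes $S$ from $V'$ once $W(v,S)\geq \frac{k-1}{\epsilon_m}$ for every $v\in S$. I must establish two things: that any partition the algorithm returns lies in the $\epsilon_m$-core, and that the algorithm halts in polynomially many iterations.

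First I would argue correctness through the removal justification. The point is that a frozen agent can never again sit in an $\epsilon_m$-strongly blocking coalition. If $v$ is frozen inside $S$, its utility is pinned at $W(v,S)\geq \frac{k-1}{\epsilon_m}$, and since $v$ is thereafter excluded from $V'$ the coalition $S$ stays intact, so this utility is permanent. Any later coalition $T$ containing $v$ would have to satisfy $W(v,T)>\epsilon_m\cdot W(v,S)\geq \epsilon_m\cdot\frac{k-1}{\epsilon_m}=k-1$, which is impossible because an agent has at most $k-1$ neighbours inside a $k$-bounded coalition. Hence restricting the search to $V'$ discards no genuine blocking coalition, and once the blocking test finds none, the partition is in the $\epsilon_m$-core.

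Next I would prove termination with the social-welfare potential $u(P)$. When $S$ breaks off, the members of $S$ contribute $\sum_{v\in S}W(v,S)$ to the new welfare, while the welfare that is lost is the members' former utility $\sum_{v\in S}u(v,P)$ together with the edges their former coalition-mates outside $S$ forfeit; the latter quantity is at most $\sum_{v\in S}u(v,P)$, so the total loss is at most $2\sum_{v\in S}u(v,P)$. The blocking test guarantees $W(v,S)>\epsilon_m u(v,P)=2\,u(v,P)$ for every $v\in S$, whence $\sum_{v\in S}W(v,S)>2\sum_{v\in S}u(v,P)$, so $u(P)$ strictly increases at every iteration (in fact by at least $2$, by integrality). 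Since $u(P)$ is a non-negative integer bounded above by $2|E|$, the outer loop runs at most $|E|$ times, and each pass enumerates only $O(n^{k})$ candidate coalitions; this gives the claimed polynomial bound, and together with correctness it establishes that the $\epsilon_m$-core is non-empty.

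The delicate step will be the welfare bookkeeping: I have to be careful not to overcount the edges internal to $S$, which are in fact \emph{preserved} when $S$ forms, so that the loss is genuinely at most $2\sum_{v\in S}u(v,P)$ and the strict inequality $W(v,S)>2\,u(v,P)$ really forces $u(P)$ upward. Once that accounting is pinned down, the remainder is a verbatim translation of the additive argument, with $u(v,P)+\epsilon_a$ replaced by $\epsilon_m\,u(v,P)$ and $k-1-\epsilon_a$ replaced by $\frac{k-1}{\epsilon_m}$; note that, unlike the additive case, here the welfare increase is strict, so no separate ``welfare unchanged'' sub-case is needed and the freezing test is required only for correctness rather than for termination.
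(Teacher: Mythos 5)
Your proof is correct and takes essentially the same route as the paper's: the identical potential argument showing that when $S$ breaks off the welfare loses at most $2\sum_{v\in S}u(v,P)$ and gains at least $\sum_{v\in S}(2\cdot u(v,P)+1)$, so $u(P)$ strictly increases and the algorithm terminates in polynomially many iterations. Your reading of the blocking test as $W(v,S)>\epsilon_m\cdot u(v,P)$ matches the paper's intent (its stated ``$W(v,S)>\frac{u(v,P)}{\epsilon_m}$'' is evidently a typo, since the paper's own proof uses the bound $\sum_{v\in S}(2\cdot u(v,P)+1)$), and your explicit justification of the freezing step via $\epsilon_m\cdot\frac{k-1}{\epsilon_m}=k-1$ is a detail the paper leaves implicit by analogy with the additive case.
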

\begin{proof}
Clearly, if the algorithm terminates, the $k$-bounded partition $P$ is in the $\epsilon_m$-core.
We show that $u(P)$ always increases.
The algorithm initiates a new iteration whenever $\forall v \in S, W(v ,S) > \frac{u(v, P)}{\epsilon_m}$, which can happen only if $S$ breaks off.
When $S$ breaks off, the social welfare decreases by at most $2 \cdot \sum\limits_{v\in S} u(v,P)$, and increases by at least $\sum\limits_{v\in S} (2 \cdot u(v,P)+1)$.
Since $\sum\limits_{v\in S} (2 \cdot u(v,P)+1) > 2 \cdot \sum\limits_{v\in S} u(v,P)$, then $u(P)$ always increases.
\end{proof}

\subsection{Nash Stablity}

Due to \cite{bogomolnaia2002stability}, for any $(G,k)$ a Nash stable $k$-bounded partition exists.
In this section, we present Algorithm \ref{alg:NS}, a polynomial time algorithm that finds such a partition in the unweighted setting. The algorithm begins with all agents in singletons and iteratively considers for each agent whether it may benefit from leaving her coalition and joining a coalition of size of at most $k-1$.

\begin{algorithm}[ht]
    \caption{Finding a Nash stable $k$-bounded partition}
    \label{alg:NS}
    \SetAlgoLined
    \textbf{Input}:
    A graph $G(V,E)$ and a limit $k$\\
    \KwResult{A $k$-bounded Nash Stable partition $P$ of $V$.} 
    $P \leftarrow \{\{v\}$ for every $v \in V\}$\\
    
    outerLoop:\\ \label{NS:outerLoop} 
        \For{$v \in V$}{
            \For{$S \in P$}{
                \If{$W(v,S \cup \{v\}) > u(v,P)$  AND $|S| \leq k-1$}{
                \label{NS:if}
                    $P \leftarrow P^{-S \cup \{v\}}$ \\
                    \textbf{goto} outerLoop
                }
            }
        }
        \textbf{return} P
\end{algorithm}

\begin{theorem}
There always exists a $k$-bounded Nash stable partition, and it can be found in polynomial time.
\end{theorem}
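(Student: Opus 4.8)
The plan is to use the utilitarian social welfare $u(P)$ as an exact potential function for the improving-move dynamics encoded by Algorithm \ref{alg:NS}, and to argue that it strictly increases with every deviation the algorithm performs. Since the existence of a Nash stable partition already follows from \cite{bogomolnaia2002stability}, the substance here is to show that the algorithm terminates after polynomially many iterations and that its output is indeed Nash stable; the potential-function argument delivers both at once in the unweighted setting.

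First I would analyze a single beneficial deviation. Suppose the condition in line \ref{NS:if} holds for an agent $v$, currently residing in coalition $C \in P$, with respect to a target coalition $S$ (so $|S| \le k-1$ and $v \notin S$). Moving $v$ from $C$ to $S$ changes the social welfare in three places: $v$'s own utility goes from $W(v,C)$ to $W(v,S)$; every neighbor of $v$ in $C$ loses its edge to $v$, for a total loss of $W(v,C)$; and every neighbor of $v$ in $S$ gains its edge to $v$, for a total gain of $W(v,S)$. Because the game is symmetric, summing these contributions gives a net change of exactly
\[
\Delta u = 2\bigl(W(v,S) - W(v,C)\bigr).
\]
The triggering condition $W(v,S\cup\{v\}) = W(v,S) > u(v,P) = W(v,C)$ forces $\Delta u > 0$. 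The key accounting step — and the main thing to get right — is this factor of two arising from the symmetry of the edge weights, since it is what guarantees the deviation is welfare-improving rather than merely privately improving.

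In the unweighted setting all utilities are integers, so $\Delta u \ge 2$ at every successful iteration, while $u(P)$ starts at $0$ and is bounded above by $2|E|$. Hence the algorithm performs at most $|E|$ deviations before terminating; each outer pass scans $O(|V|\cdot|P|)$ agent–coalition pairs and evaluates each in polynomial time, so the overall running time is polynomial. Finally I would argue correctness at termination: no agent $v$ admits a coalition $S$ with $|S|\le k-1$ and $W(v,S\cup\{v\}) > u(v,P)$, which is precisely the no-profitable-deviation condition defining Nash stability, once we observe that deviating to a fresh singleton yields utility $0 \le u(v,P)$ and is therefore never strictly beneficial (all weights are non-negative). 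Thus the returned partition is Nash stable.
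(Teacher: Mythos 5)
Your proposal is correct and takes essentially the same route as the paper's own proof: both use $u(P)$ as a potential that strictly increases by a multiple of $2$ at every successful deviation and is bounded by $2|E|$, giving termination within $|E|$ iterations and Nash stability at termination. Your explicit computation of $\Delta u = 2\bigl(W(v,S)-W(v,C)\bigr)$ and the observation that deviating to a fresh singleton is never profitable (since weights are non-negative) simply spell out details the paper leaves implicit.
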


\begin{proof}
Consider Algorithm \ref{alg:NS}.
Clearly, when Algorithm \ref{alg:NS} terminates, the partition $P$ is a $k$-bounded Nash Stable partition.
We now show that Algorithm \ref{alg:NS} must always terminate, and it runs in polynomial time.
Returning to line \ref{NS:outerLoop} occurs only when the if statement in line \ref{NS:if} is true, which entails that $u(P)$ has increased. Since any increase in $u(P)$ must be by multiples of $2$, and since $u(P)$ is bounded by $2|E|$, the algorithm must terminate after at most $|E|$ iterations.
\end{proof}

\end{document}